\documentclass[10pt]{article}
\pdfoutput=1
\usepackage{amsmath, amsthm, amssymb, bbm, setspace, mathrsfs, color, listings, graphicx, natbib, multicol, dcolumn}
\usepackage[font=footnotesize,labelfont=bf]{caption}
\usepackage{float,subcaption}
\usepackage[page,title]{appendix}
\usepackage[left=0.7in,right=1.0in,top=0.7in,bottom=1.0in,nohead]{geometry}

\newif\iffinalversion
\finalversiontrue
\iffinalversion
	\usepackage[colorlinks=true]{hyperref} 
\else
	\usepackage[colorlinks=false]{hyperref} 
\fi

\setstretch{1}

\newtheorem{thm}{Theorem}
\newtheorem{asm}{Assumption}
\newtheorem{lem}[thm]{Lemma}

\newtheorem{prop}[thm]{Proposition}

\newcommand{\utwi}[1]{\mbox{\boldmath $ #1$}}
\newcommand{\Y}{{\utwi{Y}}}
\newcommand{\X}{{\utwi{X}}}
\providecommand{\e}[1]{\ensuremath{\times 10^{#1}}}

\newcolumntype{d}{D{.}{.}{9}} 

\begin{document}

\title{Change Points via Probabilistically Pruned Objectives}
\author{Nicholas A. James and David S. Matteson \\ Cornell University\footnote{
James is a PhD Candidate,
School of Operations Research and Information Engineering,
Cornell University,
206 Rhodes Hall,
Ithaca, NY 14853
(Email: \href{mailto:nj89@cornell.edu}{nj89@cornell.edu}; Web: \url{https://courses.cit.cornell.edu/nj89/}).
Matteson is an Assistant Professor,
Department of Statistical Science,
Cornell University,
1196 Comstock Hall,
Ithaca, NY 14853
(Email: \href{mailto:matteson@cornell.edu}{matteson@cornell.edu}; Web: \url{http://www.stat.cornell.edu/\~matteson/}).}}
\date{}
\maketitle
\begin{abstract}
	The concept of homogeneity plays a critical role in statistics, both in its applications as well as its theory. Change point analysis is a statistical 
	tool that aims to attain homogeneity within time series data. This is accomplished through partitioning the time series into a number of contiguous 
	homogeneous segments. The applications of such techniques range from identifying chromosome alterations to solar flare detection. In this manuscript we present 
	a general purpose search algorithm called cp3o that can be used to identify change points in multivariate time series. This new search procedure can be 
	applied with a large class of goodness of fit measures. Additionally, a reduction in the computational time needed to identify change points is accomplish 
	by means of probabilistic pruning. With mild assumptions about the goodness of fit measure this new search algorithm is shown to generate consistent estimates 
	for both the number of change points and their locations, even when the number of change points increases with the time series length.\par
	
	A change point algorithm that incorporates the cp3o search algorithm and E-Statistics, e-cp3o, is also presented. The only distributional assumption that the 
	e-cp3o procedure makes is that the absolute $\alpha$th moment exists, for some $\alpha\in(0,2)$. Due to this mild restriction, the e-cp3o procedure can be applied 
	to a majority of change point problems. Furthermore, even with such a mild restriction, the e-cp3o procedure has the ability to detect \textit{any} type of 
	distributional change within a time series. Simulation studies are used to compare the e-cp3o procedure to other parametric and nonparametric change point procedures, 
	we we highlight applications of e-cp3o to climate and financial datasets.
\end{abstract}

\par\vfill\noindent
{\bf KEY WORDS:}
Dynamic Programming;
Incomplete U-Statistics;
Multivariate time series;
Pruning

\clearpage\pagebreak\newpage

\section{Introduction}\label{intro}
The analysis of time ordered data, referred to as time series, has become a common practice in both academic and industrial settings. The applications of such 
analysis span may different fields, each with its own analytical tools. Such fields  include network security \citep{blazek01,siris06}, fraud detection \citep{akoglu10,fawcett97}, 
financial modeling \citep{andreou02,dias04}, climate analysis \citep{wang13}, astronomical observation \citep{friedman96,xie13}, and many others.\par

However, when analysis is performed it is generally assumed that the data adheres to some form of homogeneity. This could mean a range of things, depending upon the 
application area. Some common types of assumed homogeneity include: constant mean, constant variance, and strong or weak stationarity. Depending on the nature of 
these assumptions it may not be appropriate, or practical, to apply a given analytical procedure to many different types of time series. For instance, an algorithm 
that assumes weak stationarity would not be suitable for analyzing data that follows a Cauchy distribution, because of its infinite expectation. Furthermore many 
time series of real data can be seen, even through visual inspection, to violate such homogeneity conditions.\par

Results obtained under such model misspecification can vary in their degree of inaccuracy \citep{chong95}. The resulting bias from such misspecification is one of the reasons 
for the current resurgence of change point analysis. Change point analysis attempts to partition a time series into homogeneous segments. Once again the definition 
of homogeneity will depend upon the application area. In this paper we will use a notion of homogeneity that is common in the statistical literature. We will say 
that a segment is homogeneous if all of its observations are identically distributed. Using this definition of homogeneity, change point analysis can be performed 
in a variety of ways.\par

In this paper we consider the following formulation of the offline multiple change point problem. Let $Z_1, Z_2,\dots, Z_T\in\mathbb R^d$ be a length $T$ sequence of independent 
$d$-dimensional time ordered observations. The dimension of our observations is arbitrary, but assumed to be fixed. Additionally, let $F_0, F_1, F_2, \dots,$ be a (possibly infinite) 
sequence of distributional functions, such that $F_i\neq F_{i+1}.$ It is also assumed that in the sequence of observations, 
there is at least one distributional change. Thus, there exists $k(T)\ge1$ time indices $0=\tau_{0,T}<\tau_{1,T}<\dots<\tau_{k(T),T}<\tau_{k(T)+1,T}=T$, such that 
$Z_i\stackrel{iid}{\sim} F_j$, for $\tau_{j,T}<i\le\tau_{j+1,T}$. From this notation it is clear 
that the locations of change points $\tau_{j,T}$ depend upon the sample size. However, we will usually suppress this dependence and use the notation $\tau_j$ for simplicity. 
The challenge of multiple change point analysis is to provide a good estimate of both the number of change points, $k(T)$, as well as their respective locations, 
$\tau_1, \tau_2,\dots \tau_{k(T)}$. In some cases it is also necessary to provide some information about the distributions $F_0,\dots, F_{k(T)}$. However, once a segmentation is 
provided it is usually straight-forward to obtain such information.\par

A popular approach is to fit the observed data to a parametric model. In this setting a change point corresponds to a change in the monitored parameter(s). Earlier 
work in this area assumes Gaussian observations and proceeds to partition the data through the use of maximum likelihood \citep{maboudou13}. More recently, extensions to other 
members of the Exponential family of distributions and beyond have been considered \citep{chen11}. In general, all of these approaches rely on the existence of a likelihood 
function with an analytic expression. Once the likelihood function is known, analysis is reduced to finding a computationally efficient way to maximize the likelihood over a set 
of candidate parameter values.\par

Parametric approaches however, rely heavily upon the assumption that the data behaves according to the predefined model. If this is not the case, then the degree of 
bias in the obtained results is usually unknown \citep{pitarakis04}. In practice, it is almost always difficult, if not impossible, to test for adherence to these assumptions. 
Under such settings, performing nonparametric analysis is a natural way to proceed \citep{brodsky93}. Since nonparametric approaches make much weaker assumptions than their 
parametric counterparts they can be used in a much wider variety of settings; for example, the analysis of internet traffic data, where there is no commonly accepted 
distributional model. Even though these methods do not directly impose a distributional model for the data, they do make their own types of assumptions \citep{zou14}. For 
instance, a common assumption is the existence of a density function, which then allows practitioners to perform maximum likelihood estimation by using estimated densities. 
However, estimation becomes inaccurate and time consuming when the dimension of the time series increases \citep{hastie09}.\par

Performing multiple change point analysis can easily become computationally intractable. Usually the number of true change points is not known beforehand. However, even if 
such information were provided, finding the locations is not a simple task. For instance, if it is known that the time series contains $k$ change points then there are 
$\mathcal O(T^k)$ possible segmentations. Thus naive approaches to find the best segmentation quickly become impractical. More refined techniques must therefore be employed 
in order to obtain change point estimates in a reasonable amount of time.\par

Most existing procedures for performing retrospective multiple change point analysis can be classified as belonging to one of two groups. The first consists of search 
procedures which will return what are referred to as \emph{approximate} solutions, while the second consist of those that produce \emph{exact} solutions. As indicated by the 
name, the approximate procedures tend to produce suboptimal segmentations of the given time series. However, their benefit is that they tend to have provably much lower 
computational complexity than procedures that return optimal segmentations.\par

Approximate search algorithms tend to rely heavily on a subroutine for finding a single change point. Estimates for multiple change point locations are produced by iteratively 
applying this subroutine. Such algorithms are commonly referred to as binary segmentation algorithms. In many cases it can be shown that binary segmentation algorithms have a 
complexity of $\mathcal O(T\log T)$. This type of approach to multiple change point analysis was introduced by \cite{vostrikova81} and has since been adapted by many others. 
Such adaptations include the Circular Binary Segmentation approach of \cite{olshen04} as well as the E-Divisive approach of \cite{matteson13}. The Wild Binary Segmentation 
approach of \cite{fryzlewicz13} is a variation of binary segmentation that utilizes random intervals in an attempt to further reduce computational time. An extension of this 
approach to multivariate multiplicative time series called Sparsified Binary Segmentation has been produced by \cite{cho12}. Each of these procedures have been shown to produce 
consistent estimates of both the number and locations of change points under a variety of model conditions.\par

Exact search algorithms return segmentations that are optimal with respect to a prespecified goodness of fit measure, such as a log likelihood function. The naive approach of 
searching over all possible segmentations quickly becomes impractical for relatively small time series with a few change points. Therefore, in order to achieve a reasonable 
computational cost, the utilized goodness of fit measures often satisfy Bellman's Principle of Optimality \citep{bellman52}, and can thus be optimized through the use of dynamic 
programming. However, in most cases the ability to obtain this optimal solution comes with a computational cost. Usually this results in at least $\mathcal O(T^2)$ computational 
complexity. Examples of exact algorithms include the Kernel Change Point algorithm, \citep{cappe07} and \citep{arlot12}, and the MultiRank algorithm \citep{fong11}. The complexity 
of these algorithms also depends upon the number of identified change points. However, a method introduced by \cite{jackson05}, as well as the PELT algorithm of \cite{killick12} 
can both obtain optimal segmentations with running times that are independent of the number of change points. An additional benefit of the PELT approach is that under certain 
conditions it is shown to have an expected running time that is linear in the length of the time series.\par

The second aspect of multiple change point analysis is the determination of the number of change points. The first technique that is commonly used by 
approximate search algorithms is hypothesis testing. This method continues to identify change points until they are unable to reject the null hypothesis of no change. Such an 
approach however, is not well suited for many procedures that use exact search algorithms, since many identify all change points at once, 
instead of sequentially, as is the case with binary segmentation. Many change point algorithms that use an exact search procedure instead turn to penalized optimization. The 
reasoning behind penalization is that a more complex model, in this case one with more change points, will better fit the observed data. The the penalty thus helps to guard 
against over segmentation. \cite{yao87} showed that using the Schwarz Criterion can produce a consistent estimate of the number of change points. It has since become popular 
to maximize a penalized goodness of fit measure of the following form,
\begin{equation}\label{penGof}
    S(k,\beta) = \max_{\tau_1<\tau_2<\dots<\tau_k}\ \sum_{i=1}^k\mathcal C(C_i) + {\mathcal P}(k),
\end{equation}
for a penalty function ${\mathcal P}(\cdot)$ and measure of segmentation quality $\mathcal C(\cdot)$. A common choice for the penalty function is ${\mathcal P}(k)=-\beta k$, for 
some user defined positive constant $\beta.$ This type of penalization only takes into consideration the 
number of change points, and not their location. There are other penalization approaches that not only consider the number of change points, but also the change point locations. 
See for instance \cite{zhang07} and \cite{hannart12}.\par

An alternative to penalization is to instead generate all optimal segmentations with $k$ change points, up to some prespecified upper limit. This corresponds to evaluating 
$S(k,0)$ from Equation \ref{penGof} for a range of $k$ values. However, depending on the choice of $\mathcal C(\cdot)$ it may not be possible to efficiently calculate $S(k,0)$ 
for a range of $k$ values. And thus the search procedure would have to be run numerous times, which can become rather inefficient. Penalization tends to be faster, but does 
require the specification of a penalty function or constant. This choice is highly dependent upon the application field and will require some sort of knowledge about the data. 
Some ways to choose these parameters include cross validation \citep{arlot11}, generalized degrees of freedom \citep{shen02}, and slope heuristics \citep{arlot09}. On the other 
hand, generating all optimal segmentations avoids having to make such a selection.

In the following sections we introduce a change point search procedure which we call cp3o (\textbf{C}hange \textbf{P}oints via \textbf{P}robabilisticly 
\textbf{P}runed \textbf{O}bjectives). This is an exact search procedure that can be applied to a larger number of goodness of fit measures in order to reduce the amount of time 
taken to estimate change point locations. Additionally, the cp3o algorithm allows for the number of change points to be quickly determined without having to specify a penalty 
term, while at the same time generating all other optimal segmentations as a byproduct.\par

As the cp3o procedure can be applied to a general goodness of fit measure we propose one that is based on E-Statistics \cite{rizzo05}, which we call e-cp3o. The e-cp3o method is
a nonparametric algorithm that has the ability to detect \textit{any} type of distributional change. The use of E-Statistics also allows the e-cp3o algorithm to perform analysis 
on multivariate time series without suffering from the curse of dimensionality.\par

The results from a variety of simulations show that our method makes a reasonable trade off between speed and accuracy in most cases.  
In addition to the computational benefits, we show that the cp3o procedure generates consistent estimates for both the number and location of change points when equipped with 
an appropriate goodness of fit measure. Furthermore, under additional assumptions we also show consistency in the setting where the number of change points is increasing 
with the length of time series.\par

The remainder of this paper is organized as follows. In Section \ref{Sec:prob-prune} we discuss the probabilistic pruning procedure used by cp3o, along with 
conditions necessary to ensure consistency. Section \ref{Sec:ecp3o} is devoted to the development of the e-cp3o algorithm and showing that it satisfies the conditions outlined 
in Section \ref{Sec:prob-prune}. Results for applications to both simulated and real datasets are given in Sections \ref{Sec:Simulations} and \ref{Sec:Real}. Concluding remarks 
are left for Section \ref{Sec:Conclusion}.

\section{Probabilistic Pruning}\label{Sec:prob-prune}
When performing change point analysis one must have a quantifiable way of determining whether one segmentation is better than another. When using an exact search procedure this 
is most commonly accomplished through the use of a goodness of fit measure. Suppose that there are $k$ change points $0=\tau_0<\tau_1<\cdots<\tau_k<\tau_{k+1}=T$. These $k$ 
locations partition the time series into $k+1$ segments $C_j=\{Z_i: \tau_{j-1}<i\le\tau_j\}$. The challenge now is to select the change point locations so that the 
observations within each segment are identically distributed, and the distribution of observations in adjacent segments are different. Therefore, we will consider sample goodness 
of fit measures of the following form,
\begin{equation}\label{gof}
	\widehat G(k) = \max_{\tau_1<\tau_2<\cdots<\tau_k}\sum_{j=1}^k\widehat R(C_j, C_{j+1}),
\end{equation}
in which $\widehat R(A,B)$ is a measure of the sample divergence between observation sets $A$ and $B$. The divergence measure $\widehat R$ is such that larger values 
indicate that the distributions of the observations in the two sets are more distinct. Since each term of the sum in Equation \ref{gof} depends only upon contiguous observations 
it is possible to obtain the value of $\widehat G(k)$ through dynamic programming.\par

Using traditional dynamic programming approaches greatly reduces the computational time required to perform the optimization in Equation \ref{gof}. However, the running time of 
such methods is still quadratic in the length of the time series, thus limiting their applicability. Many of the calculations performed during the dynamic programs do not result 
in the identification of a new change point. These calculations can be viewed as excessive since they do not provide any additional information about the series' segmentation, 
and quickly compound to slow down the algorithm. Thus a practical step towards reducing running time, and even possibly the theoretical computational complexity, is to quickly 
identify such excessive calculations and have them removed. One way to do this is by continually pruning the set of potential change point locations. \cite{rigaill10} proposes a 
pruning method that can be used when the goodness of fit measure is convex, and can also be adapted for online change point detection. Since the sample divergence measure 
$\widehat R$ is not necessarily convex this pruning approach may not always be applicable. The cp3o procedure therefore performs pruning that is more in line with 
the approach taken by \cite{killick12} in developing the PELT method.\par

Let $0<v<t<s<u\le T$ and $Z_a^b=\{Z_a, Z_{a+1},\dots, Z_b\}$ for $a\le b$. Furthermore, suppose that there exists a constant $\Gamma$ such that 
$$\widehat R(Z_{v+1}^t,Z_{t+1}^u) - \widehat R(Z_{v+1}^t,Z_{t+1}^s) - \widehat R(Z_{t+1}^s,Z_{s+1}^u) < \Gamma$$
holds for all $v<t<s<u$. The value of $\Gamma$ will depend not only on the distribution of our observations, but also the nature of the divergence measure $\widehat R$. Therefore, 
in many settings it may be difficult, if not impossible, to find such a $\Gamma$. Instead we consider 
the following probabilistic formulation. Let $\epsilon>0$, we then wish to find $\Gamma_\epsilon$ such that for all $v<t<s<u$
$$\mathbb P\!\left(\widehat R(Z_{v+1}^t,Z_{t+1}^u) - \widehat R(Z_{v+1}^t,Z_{t+1}^s) - \widehat R(Z_{t+1}^s,Z_{s+1}^u) \ge \Gamma_\epsilon\right)\le\epsilon.$$
Let $\zeta_k(t)$ denote the value of $\widehat G(k)$ when segmenting $Z_1, Z_2,\dots, Z_t$ with $k$ change points. Using this notation we can express our probabilistic 
pruning rule as follows.

\begin{lem}\label{lemPrune}
	Let $v$ be the optimal change point location preceding $t$, with $t<s<u\le T$. If 
	$$\zeta_k(t)+\widehat R(Z_{v+1}^t,Z_{t+1}^s)+\Gamma_\epsilon < \zeta_k(s),$$ 
	then with probability at least $1-\epsilon$, $t$ is not the optimal change point location preceding $u$.
\end{lem}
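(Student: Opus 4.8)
The plan is to argue by contradiction, in the spirit of the PELT pruning lemma of \cite{killick12}: combine the optimality (Bellman) structure of the objective $\widehat G$ with a single application of the probabilistic super-additivity bound that defines $\Gamma_\epsilon$. The role played by $-K$ in \cite{killick12} is played here by $\Gamma_\epsilon$, and the novelty is that the bound holds only with probability $1-\epsilon$.

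First I would suppose the conclusion fails, so that $t$ \emph{is} the optimal change point location preceding $u$. Since $v$ is the optimal change point preceding $t$, the segment realizing the last term of $\zeta_k(t)$ is $Z_{v+1}^t$; appending $t$ as a change point together with the final segment $Z_{t+1}^u$ then yields an admissible segmentation of $Z_1,\dots,Z_u$ whose value is $\zeta_k(t)+\widehat R(Z_{v+1}^t,Z_{t+1}^u)$, and optimality of $t$ forces this to equal the optimal $(k{+}1)$-change-point value $\zeta_{k+1}(u)$. Next I would route a competitor through $s$: keep the prefix up through $t$ fixed but insert $s$ as an additional change point before reaching $u$, which gives $\zeta_{k+1}(u)\ge \zeta_k(s)+\widehat R(Z_{t+1}^s,Z_{s+1}^u)$. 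Finally I would apply the definition of $\Gamma_\epsilon$ to the fixed quadruple $v<t<s<u$: with probability at least $1-\epsilon$, $\widehat R(Z_{v+1}^t,Z_{t+1}^u)<\widehat R(Z_{v+1}^t,Z_{t+1}^s)+\widehat R(Z_{t+1}^s,Z_{s+1}^u)+\Gamma_\epsilon$. On that event, chaining these with the hypothesis $\zeta_k(t)+\widehat R(Z_{v+1}^t,Z_{t+1}^s)+\Gamma_\epsilon<\zeta_k(s)$ yields $\zeta_{k+1}(u)=\zeta_k(t)+\widehat R(Z_{v+1}^t,Z_{t+1}^u)<\zeta_k(s)+\widehat R(Z_{t+1}^s,Z_{s+1}^u)\le \zeta_{k+1}(u)$, a contradiction; hence on an event of probability at least $1-\epsilon$ the point $t$ cannot be the optimal change point preceding $u$. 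Note that only one probabilistic inequality enters, so no union bound is needed and the level $\epsilon$ is not degraded.

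The hard part will be the dynamic-programming bookkeeping in the ``route through $s$'' step rather than anything probabilistic. Because each summand $\widehat R(C_j,C_{j+1})$ couples two adjacent segments, the divergence term charged to the final segment $Z_{s+1}^u$ depends on where the preceding change point sits, so I must be precise about which intermediate DP sub-value is being compared and verify that, with the prefix through $t$ held fixed, $\zeta_k(s)+\widehat R(Z_{t+1}^s,Z_{s+1}^u)$ is genuinely a valid lower bound for $\zeta_{k+1}(u)$ — this is exactly where the convexity-free pruning of \cite{killick12} has to be adapted from a sum of per-segment costs to a sum of pairwise-divergence terms. Once the two admissible segmentations are written down explicitly, the rest is a routine chain of inequalities, and the probabilistic content reduces to the one-line invocation of the definition of $\Gamma_\epsilon$.
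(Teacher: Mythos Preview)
Your proposal is correct and follows essentially the same route as the paper: add $\widehat R(Z_{t+1}^s,Z_{s+1}^u)$ into the hypothesis, invoke the defining inequality of $\Gamma_\epsilon$ once, and then bound the right-hand side by $\zeta_{k+1}(u)$; you merely package this as a contradiction argument whereas the paper argues directly. The ``route through $s$'' DP step you single out as the delicate part is precisely the inequality $\zeta_k(s)+\widehat R(Z_{t+1}^s,Z_{s+1}^u)\le\zeta_{k+1}(u)$ that the paper asserts in one line without further bookkeeping, so your caution there is well placed but no additional idea beyond the paper's is required.
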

\begin{proof}
	If
	\begin{eqnarray*}
		\zeta_k(t)+\widehat R(Z_{v+1}^t,Z_{t+1}^s)+\widehat R(Z_{t+1}^s,Z_{s+1}^u)+\Gamma_\epsilon &<& \zeta_k(s)+\widehat R(Z_{t+1}^s,Z_{s+1}^u),
	\end{eqnarray*}
	then from the definition of $\Gamma_\epsilon$ we have that 
	$$\mathbb P\!\left(\zeta_k(t)+\widehat R(Z_{v+1}^t,Z_{t+1}^u) < \zeta_k(s)+\widehat R(Z_{t+1}^s,Z_{s+1}^u)\right)\ge 1-\epsilon.$$
	Since the optimal value attained from segmenting $Z_1, Z_2,\dots, Z_u$ with $k+1$ change points is an upper bound for $\zeta_k(t)+\widehat R(Z_{t+1}^s,Z_{s+1}^u)$,
	$$\mathbb P\!\left(\zeta_k(t)+\widehat R(Z_{v+1}^t,Z_{t+1}^u) < \zeta_{k+1}(u)\right)\ge 1-\epsilon.$$
	From this we can see that with probability at least $1-\epsilon$, it would be better to have $s$ as the change point prior to $u$.
\end{proof}

\subsection{Consistency}\label{Sec:Consistency}
As has been mentioned before, when performing multiple change point analysis it is of utmost importance to obtain an accurate estimate of the number of change points, as well 
as their locations. Therefore, in this section we will show that under a certain asymptotic setting, the estimates generated by maximizing Equation \ref{gof} generate 
consistent location estimates.\par

When showing consistency many authors consider the case in which the number of change points is held constant, while the number of observations tends toward infinity. This seems 
rather unrealistic, as one would expect to observe additional change points as more data is collected. For this reason we will allow the number of change points, $k(T)$, to 
possibly tend towards infinity as the length of the time series increases. The asymptotic setting we will consider is similar in nature to that taken by \cite{venkatraman92} 
and \cite{zou14}.\par


In order to establish consistency of the proposed estimators we make the following assumptions.
\begin{asm}\label{asm0}
	Let ${\mathscr F}=\{F_0, F_1, \dots\}$ be a collection of distribution functions, and $\{R_{j\ell}\}$ a collection of doubly indexed positive finite constants. Suppose that $A$ 
	and $B$ are disjoint sets of observations, such that the observations in $A$ have distribution $F_a$ and those of $B$ hare distribution $F_b$, for $F_a, F_b\in{\mathscr F}.$ 
	The constants $R_{j\ell}$ are such that $\hat R_{ab}={\widehat R}(A,B)\to R_{ab}$ almost surely as $\min(\#A, \#B)\to\infty$. Furthermore let $f$ be a function such that 
	$|\hat R_{ab}-R_{ab}| = o(f(\#A\wedge\#B))$ almost surely for all pairs $a$ and $b$.
\end{asm}

\begin{asm}\label{asm1}
	Let $\displaystyle\lambda_T=\min_{1\le j\le k(T)}(\tau_j-\tau_{j-1})$, and suppose $\lambda_T\to\infty$ as $T\to\infty$. 
\end{asm}
Assumption \ref{asm1} states that the number of observations between change points tends towards infinity. This later allows us to apply the law of large numbers.
\begin{asm}\label{asm2}
	The number of change points $k(T)$ and its upper bound $K(T)$ are such that $k(T)=o\!\left(\frac{1}{f(T)}\right)$ and $k(T)=o(K(T))$.
\end{asm}
The above assumption controls the rate at which the number of change points can increase. This is directly related to the rate at which our sample estimates 
converge almost surely to their population counterparts.
\begin{asm}\label{asm3}
	Let $\mathscr F$ be the collection of distribution functions from Assumption \ref{asm0}. From this collection we define a set of random variables 
	$\{X(a,b)\}_{a,b=0}^\infty.$ For each pair of values $0\le a\le b$ the random variable $X(a,b)$ has a mixture distribution created with mixture components 
	$F_a, F_{a+1},\dots, F_b.$\par
	Then for $r\le q$, and integers $0=s_0<s_1<s_2<\cdots<s_r<s_{r+1}=q+1$, define,
	\begin{equation*}
		G^q(r) = \max_{s_1, s_2,\dots, s_r}\sum_{i=1}^rR(X(s_{i-1},s_i-1), X(s_i,s_{i+1}-1));
	\end{equation*}
	and for $r>q$ we define $G^q(r)$ to be equal to $G^q(q)$. 
	Assume that $\displaystyle d_T=\max_{1\le i,j\le k(T)}G^{k(T)}(i)-G^{k(T)}(j),$ is such that 
	$[d_Tk(T)]/[K(T)]\to 0$ as $T\to\infty$.
	
\end{asm}
Assumption \ref{asm3} concerns the rate at which additional change points increase the objective function of interest. We will show that a higher upper bound on the 
number of change points is necessary when each additional change point has the potential to greatly change the value of $G^{q}(r)$.
\begin{asm}\label{asm4}

	
	Let $0\le\pi<\gamma<\rho\le1$ and $i_1<i_2<i_3<i_4$ be positive integers. Suppose that the time series $Z_1, Z_2, \dots, Z_T$ is such that 
	$Z_{\lfloor\pi T\rfloor+1},\dots,Z_{\lfloor\gamma T\rfloor}\sim X(i_1,i_2)$ and $Z_{\lfloor\gamma T\rfloor+1},\dots,Z_{\lfloor\rho T\rfloor}\sim X(i_3,i_4)$ for every 
	sample of size $T$. For $\tilde\gamma\in(\pi,\rho)$ we define the following sets $A(\tilde\gamma)=\{Z_{\lfloor\pi T\rfloor},\dots, Z_{\lfloor\tilde\gamma T\rfloor}\}$ 
	and $B(\tilde\gamma) = \{Z_{\lfloor\tilde\gamma T\rfloor+1},\dots,Z_{\lfloor\rho T\rfloor}\}$. Assume that there exist a class of functions indexed by $\pi, \gamma$, and 
	$\rho$; $\Theta_\pi^\rho(\tilde\gamma|\gamma):(\pi,\rho)\mapsto\mathbb R,$ such that 
	$\widehat R(A(\tilde\gamma),B(\tilde\gamma))\to \Theta_\pi^\rho(\tilde\gamma|\gamma)R(X(i_1,i_2),X(i_3,i_4))$ almost surely as $T\to\infty$. Finally we assume that 
	$\Theta_\pi^\rho(\tilde\gamma|\gamma)$ has a unique maximizer at $\tilde\gamma=\gamma$.
\end{asm}
Assumption \ref{asm4} describes the behavior of our goodness of fit measure when it is used to identify a single change point. Essentially this assumption states that the 
measure will attain its maximum value when the estimated change point location, $\tilde\gamma$, and true change point location, $\gamma$, coincide.\par

\subsubsection*{Change Point Locations}
We begin by showing that under Assumptions \ref{asm0}-\ref{asm4}, the cp3o procedure will produce consistent estimates for the change point locations.

\begin{lem}\label{lem1}
    Let $\mathcal{G}(k(T)) = \{\hat\tau_1, \hat\tau_2,\dots, \hat\tau_{k(T)}\}$ and 
    \begin{equation*}
		\mathcal{B}_T(\epsilon) = \mathcal{B}_T(\epsilon,\{\tau_i\})
    = \left\{(\eta_1, \eta_2, \dots, \eta_{k(T)})\in\mathbb R^{k(T)}: \frac{|\eta_i-\tau_i|}{T}\le\epsilon\ \mbox{for }i=1, 2, \dots, k(T)\right\}.
    \end{equation*}
    Then for all $\epsilon>0$,
    $$\mathbb P\left(\mathcal{G}(k(T))\in\mathcal{B}_T(\epsilon)\right)\to 1$$
    as $T\to\infty$.
\end{lem}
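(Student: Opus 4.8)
The plan is to show that the value of the empirical objective (Equation~\ref{gof}) at the true configuration $\{\tau_i\}$ exceeds, with probability tending to one, its supremum over all ``bad'' configurations in $\mathcal{B}_T(\epsilon)^c$; since the cp3o output maximizes Equation~\ref{gof} among $k(T)$-change-point segmentations — up to candidates that Lemma~\ref{lemPrune} discards only when they are suboptimal with high probability — this forces $\mathcal{G}(k(T))\in\mathcal{B}_T(\epsilon)$ on that event, which is the claim. Writing $C_j^{\eta}=Z_{\eta_{j-1}+1}^{\eta_j}$ for a candidate $(\eta_1,\dots,\eta_{k(T)})$ with $\eta_0=0$ and $\eta_{k(T)+1}=T$, it therefore suffices to establish
\[
\mathbb{P}\!\left(\sum_{j=1}^{k(T)}\widehat R(C_j^{\tau},C_{j+1}^{\tau})\ >\ \sup_{(\eta_1,\dots,\eta_{k(T)})\notin\mathcal{B}_T(\epsilon)}\ \sum_{j=1}^{k(T)}\widehat R(C_j^{\eta},C_{j+1}^{\eta})\right)\longrightarrow 1.
\]

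For the left side, each true segment has length at least $\lambda_T\to\infty$ (Assumption~\ref{asm1}), so Assumption~\ref{asm0} gives $\widehat R(C_j^{\tau},C_{j+1}^{\tau})=R_{j,j+1}+o(f(\lambda_T))$ almost surely for every $j$; summing the $k(T)$ terms and invoking $k(T)=o(1/f(T))$ from Assumption~\ref{asm2} yields $\sum_j\widehat R(C_j^{\tau},C_{j+1}^{\tau})=\sum_j R_{j,j+1}+o(1)$ almost surely. For the right side I would pass to rescaled time, $\eta_i=\lfloor\tilde\gamma_i T\rfloor$, and analyse the objective of a bad configuration window by window: a window straddling a single true change point at rescaled location $\gamma$ contributes a divergence converging, by Assumption~\ref{asm4}, to $\Theta_\pi^\rho(\tilde\gamma\mid\gamma)\,R(X(i_1,i_2),X(i_3,i_4))$, a window interior to a homogeneous stretch contributes $R_{aa}$, and a window straddling several true change points is bounded above by the mixture quantity $G^{k(T)}(\cdot)$ of Assumption~\ref{asm3}. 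Since a bad configuration places some $\tilde\gamma_i$ at distance more than $\epsilon$ from $\tau_i/T$, either a true change point is missed or some stretch is over-segmented: in the first case the unique-maximizer property of $\Theta_\pi^\rho(\cdot\mid\gamma)$ (Assumption~\ref{asm4}) makes the corresponding term fall short of $R_{\cdot,\cdot}$ by a fixed positive amount; in the second case the wasted change point can gain at most $d_T$ while the missed one loses order one, so that by the calibration $d_T k(T)/K(T)\to 0$ of Assumption~\ref{asm3} the population objective of the best bad configuration sits below $\sum_j R_{j,j+1}$ by a margin that does not vanish faster than the $o(1)$ approximation error. Combining the two estimates yields the displayed strict inequality with probability tending to one.

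The main obstacle is making the window-by-window ``population'' approximation legitimate uniformly over the bad configurations. Assumption~\ref{asm0} supplies almost-sure convergence only when \emph{both} segment lengths diverge, whereas a generic competitor may contain arbitrarily short segments, and the number of summands $k(T)$ itself grows, so the accumulated estimation error must be tracked carefully against the identifiability margin produced above. I would dispose of short-segment competitors first, arguing that a segment of length $o(\lambda_T)$ contributes asymptotically nothing and that re-allocating it to an under-served true change point strictly improves the objective, so such configurations are never optimal and may be removed; after this reduction every segment of every remaining configuration has length of order at least $\lambda_T$ and the rate bookkeeping of Assumptions~\ref{asm0}--\ref{asm3} applies uniformly. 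Finally, the passage from the supremum statement to the probability statement is obtained by covering $\mathcal{B}_T(\epsilon)^c$ with finitely many regions — indexed by which true change point is missed or which homogeneous stretch is over-segmented — and applying a union bound together with a discretization of the remaining continuous directions, which is harmless because $\widehat R$ changes negligibly when a segment boundary is moved by $o(T)$ observations.
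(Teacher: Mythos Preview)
Your proposal is broadly sound in spirit but takes a substantially different and more laborious route than the paper. The paper does \emph{not} compare the true configuration $\{\tau_i\}$ against the supremum over all bad configurations. Instead it argues by local improvement: assume the maximizer $\{\hat\tau_i\}$ lies outside $\mathcal{B}_T(\epsilon)$, pick the \emph{largest} index $i$ with $|\hat\tau_i-\tau_i|/T>\epsilon$, and isolate the three terms $\widehat R(M_1,M_2)+\widehat R(M_2,M_3)+\widehat R(M_3,M_4)$ of the objective that involve $\hat\tau_i$ (the remaining terms are bundled into a constant $A$). Using Assumptions~\ref{asm0} and~\ref{asm4} these three terms converge to $\sum_{j=1}^3\Theta_0^1(\beta_j|\gamma_j)R(U_j,U_{j+1})$, each factor being uniquely maximized when $\beta_j=\gamma_j$, i.e.\ when $\hat\tau_i=\tau_i$. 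Since the aggregate approximation error is $k(T)\,o(f(T))=o(1)$ by Assumption~\ref{asm2}, a value of $\hat\tau_i$ bounded away from $\tau_i$ yields a strictly smaller objective than the same configuration with $\hat\tau_i$ replaced by $\tau_i$, contradicting optimality.

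The practical payoff of the paper's local argument is that it sidesteps exactly the uniformity headaches you identify: there is no need to control $\widehat R$ on arbitrarily short competitor segments, no covering of $\mathcal{B}_T(\epsilon)^c$, and no invocation of Assumption~\ref{asm3} or the quantity $d_T$ at all---only Assumptions~\ref{asm0}, \ref{asm2}, and \ref{asm4} are used. Your global-comparison strategy can be made to work, but the step ``a segment of length $o(\lambda_T)$ contributes asymptotically nothing'' is the weak point: Assumption~\ref{asm0} gives no control of $\widehat R$ when one argument is short, so you would need an additional boundedness hypothesis or a more delicate argument to dispose of short-segment competitors before the rate bookkeeping applies. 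The paper avoids this entirely by never leaving the configuration $\{\hat\tau_j\}_{j\neq i}\cup\{\tau_i\}$, whose segments inherit the $\lambda_T$ lower bound.
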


\begin{proof}
    Suppose that $\mathcal{G}(k(T))\notin\mathcal{B}_T(\epsilon)$, then there exists $i$ such that $\frac{|\hat\tau_i-\tau_i|}{T}>\epsilon.$ Select the largest such $i$ and 
    define the following random variables. Let $M_1\sim U_1$ where $U_1$ is the distribution (possibly a mixture) created by the observations between $\hat\tau_{i-2}$ and 
    $\hat\tau_{i-1}$, $M_2\sim U_2$ for $U_2$ having distribution created by the observations between $\hat\tau_{i-1}$ and $\tau_i$. Similarly define $M_3$ for the observations 
    between $\tau_i$ and $\hat\tau_{i+1}$, and $M_4$ for the observations between $\hat\tau_{i+1}$ and $\hat\tau_{i+2}$.\par
    
    Then the value of the sample goodness of fit measure generated by the estimates of $\mathcal G(k(T))$ is
    \begin{equation*}
    	\hat R(M_1,M_2) +  \hat R(M_2,M_3) + \hat R(M_3,M_4) + A,
    \end{equation*}
    which due to Assumptions \ref{asm0} and \ref{asm2} is equal to 
    \begin{equation*}
    	\Theta_0^1(\beta_1|\gamma_1)R(U_1,U_2) +  \Theta_0^1(\beta_2|\gamma_2)R(U_2,U_3) + \Theta_0^1(\beta_3|\gamma_3)R(U_3,U_4) + B + k(T)o(f(T)).
    \end{equation*}
    In the above expressions $A$ and $B$ are collections of terms that are not affected by the choice of $\hat\tau_i$. The $\beta_i$ and $\gamma_i$ terms are as listed below.
\begin{center}
    $\begin{array}{ccc}
    	\beta_1=\frac{\hat\tau_{i-1}-\hat\tau_{i-2}}{\hat\tau_i-\hat\tau_{i-2}} & \beta_2=\frac{\hat\tau_i-\hat\tau_{i-1}}{\hat\tau_{i+1}-\hat\tau_{i-1}} & 
    	\beta_3 = \frac{\hat\tau_{i+1}-\hat\tau_i}{\hat\tau_{i+2}-\hat\tau_i}\\
    	\\
    	\gamma_1=\frac{\hat\tau_{i-1}-\hat\tau_{i-2}}{\tau_i-\hat\tau_{i-2}} & \gamma_2=\frac{\tau_{i}-\hat\tau_{i-1}}{\hat\tau_{i+1}-\hat\tau_{i-1}}  &
    	\gamma_3=\frac{\hat\tau_{i+1}-\tau_i}{\hat\tau_{i+2}-\tau_i}\\
    \end{array}$
\end{center}
    Each of the terms in the sum is maximized when $\beta_i=\gamma_i$, 
    which corresponds to $\frac{|\hat\tau_i-\tau_i|}{T}\to 0.$ By our assumptions, we have that the remainder term $k(T)o(f(T)) = o(1).$ Therefore, if 
    $\frac{|\hat\tau_i-\tau_i|}{T}$ is strictly bounded away from $0$ then the statistic will be strictly less than the optimal value as $T\to\infty$. However, this contradicts 
    the manner in which $\hat\tau_i$ is selected.
\end{proof}

\subsubsection*{Number of Change Points}
Once we have shown that the procedure generates consistent estimate for the change point locations it is simple to show that it will also produce a consistent estimate for 
the number of change points. We have chosen to implement the procedure outlined below in Assumption \ref{asm5} to determine the number of change points. However, 
other approaches could be used and still have the same consistency result.\par

\begin{asm}\label{asm5}
 Define $\nabla\widehat G(k) = \widehat G(k+1) - \widehat G(k)$, 
 $\widehat\mu(\nabla\widehat G)=\frac{\widehat G(K(T)) - \widehat G(1)}{K(T) - 1}$ and 
 $\widehat\sigma^2(\nabla\widehat G)=\widehat{Var}\{\nabla\widehat G(k): k=1,2,\dots, K(T)-1\}$.
 Then suppose our estimated number of change points is given by
 \begin{equation}\label{khat}
     \hat k(T) = 1 + \max\left\{\ell: \nabla\widehat G(1),\dots, \nabla\widehat G(\ell)> \widehat\mu(\nabla\widehat G)+
     \frac{1}{2}\sqrt{\widehat\sigma^2(\nabla\widehat G)}\right\}.
 \end{equation}
\end{asm}
The selection procedure in Equation \ref{khat} has similar intuition to the one presented by \cite{lavielle05}. Both procedures work on the principle that a true change point 
will cause a significant change in the goodness of fit. While spurious change point estimates will only cause a minuscule increases/decrease in value. We thus say 
that a change is significant if it is more than half a standard deviation above the average change. As previously stated, other methods could be used, this just happens to be the 
one that we chose to implement.\par

Before proving that we can obtain a consistent estimate for the number of change points one final assumption is made to ensure that the detection of an 
additional true change point causes a strictly positive increase in our goodness of fit measure. A similar property is also needed for the finite sample approximation.
\begin{asm}\label{asm6}
	For every fixed $q$, suppose the values $G^q(1), G^q(2),\dots, G^q(q)$ form an increasing sequence. Similarly let $\hat G^q(r)$ be the finite sample estimates of $G^q(r)$. 
	Additionally, suppose there exists $T_0$ such that for $T>T_0$ the values $\hat G^{k(T)}(1), \hat G^{k(T)}(2),\dots, \hat G^{k(T)}(k(T))$ also form an increasing sequence.
\end{asm}
\begin{lem}\label{cpNumber}
    Let $\hat k(T)$ be the number of estimated change points for a sample of size $T$, and that the conditions of Assumptions \ref{asm0}-\ref{asm6} hold. Then
    \begin{equation*}
        \lim_{T\to\infty}{\mathbb P}\left(\hat k(T) = k(T)\right) = 1.
    \end{equation*}
\end{lem}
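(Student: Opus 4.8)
The plan is the following. Write $\widehat\tau_T:=\widehat\mu(\nabla\widehat G)+\tfrac12\sqrt{\widehat\sigma^2(\nabla\widehat G)}$ for the threshold appearing in Equation \ref{khat}. Reading off the definition of $\hat k(T)$, the event $\{\hat k(T)=k(T)\}$ is precisely
\[
\big\{\nabla\widehat G(k(T))\le\widehat\tau_T\big\}\ \cap\ \big\{\widehat\tau_T<\nabla\widehat G(j)\ \text{for all }j=1,\dots,k(T)-1\big\},
\]
so it suffices to show $\mathbb P\big(\nabla\widehat G(k(T))\le\widehat\tau_T<\min_{1\le j\le k(T)-1}\nabla\widehat G(j)\big)\to1$. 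Intuitively, detecting a genuine change point must push the increment $\nabla\widehat G$ above the data-driven threshold, while a spurious one must leave it below; I would establish both by comparing $\widehat G$ to the population objective $G^{k(T)}(\cdot)$ of Assumption \ref{asm3}.

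First I would control $\widehat G(k)$ for $k\le k(T)$. On the high-probability event of Lemma \ref{lem1} that every $\hat\tau_i$ lies within $\epsilon T$ of a true change point, the optimal $k$-segmentation places its change points among the true segments, so each of its $k$ terms is an $\widehat R$ evaluated on sets of sizes $\gtrsim\lambda_T\to\infty$; Assumptions \ref{asm0}--\ref{asm2} then give $\widehat G(k)=G^{k(T)}(k)+k(T)\,o(f(T))=G^{k(T)}(k)+o_p(1)$ uniformly over $k\le k(T)$, and hence $\nabla\widehat G(k)=\big[G^{k(T)}(k+1)-G^{k(T)}(k)\big]+o_p(1)$ there, each such population increment being strictly positive by the monotonicity $G^{k(T)}(1)<\cdots<G^{k(T)}(k(T))$ of Assumption \ref{asm6}. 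For $k>k(T)$, any further change point merely subdivides a homogeneous true segment and so contributes negligibly, whence $\widehat G(K(T))-\widehat G(k(T))=o_p(K(T))$. Together with $G^{k(T)}(k(T))-G^{k(T)}(1)\le d_T$ (Assumption \ref{asm3}), this yields
\[
\widehat\mu(\nabla\widehat G)=\frac{\widehat G(K(T))-\widehat G(1)}{K(T)-1}\ \le\ \frac{d_T}{K(T)-1}+o_p(1)\ \longrightarrow\ 0
\]
by Assumption \ref{asm3}, and a parallel estimate --- at most $k(T)$ of the $K(T)-1$ increments are non-negligible, each of size at most roughly $d_T$, and $k(T)=o(K(T))$ --- gives $\widehat\sigma^2(\nabla\widehat G)\to0$. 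Hence $\widehat\tau_T\xrightarrow{p}0$.

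It remains to sandwich $\widehat\tau_T$. Since Assumption \ref{asm3} sets $G^{k(T)}(r)=G^{k(T)}(k(T))$ for $r\ge k(T)$, the increment $\nabla\widehat G(k(T))\xrightarrow{p}0$, and one must argue that it drops strictly below $\widehat\tau_T$ with probability tending to one; on the other side, for each $j<k(T)$ the increment $\nabla\widehat G(j)$ equals, up to $o_p(1)$, the positive number $G^{k(T)}(j+1)-G^{k(T)}(j)$, and one must verify that even $\min_{1\le j\le k(T)-1}\big[G^{k(T)}(j+1)-G^{k(T)}(j)\big]$ stays above the vanishing threshold. A union bound over the at most $k(T)-1$ indices $j$ --- whose number is constrained by Assumption \ref{asm2} --- then completes the proof.

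I expect this sandwiching step to be the main obstacle. Two things are delicate: (i) ruling out that any of the $K(T)-k(T)$ spurious splits ever produces an increment exceeding $\widehat\tau_T$, which requires controlling all these splits simultaneously while the threshold is itself shrinking to zero; and (ii) guaranteeing that the smallest genuine increment does not vanish faster than $\widehat\tau_T\asymp d_T/K(T)$, which is precisely where the rate comparison $d_Tk(T)/K(T)\to0$ in Assumption \ref{asm3}, together with the monotonicity in Assumption \ref{asm6}, must be used in full. Keeping this bookkeeping uniform over all $k\le K(T)$ is the crux.
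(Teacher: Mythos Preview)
Your approach is essentially the same as the paper's: bound $\widehat\mu(\nabla\widehat G)\le d_T/(K(T)-1)+o(1)\to0$, decompose $\widehat\sigma^2(\nabla\widehat G)$ into the three ranges $i<j\le k(T)$, $i\le k(T)<j$, $k(T)<i<j$ to get $\widehat\sigma^2\to0$, and then contrast the vanishing threshold with the strictly positive population increments $\nabla G^{k(T)}(j)$ for $j<k(T)$ (Assumption \ref{asm6}) and the zero increment at $j=k(T)$.

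One remark: the paper is in fact terser than you on the ``sandwiching'' you flag as the main obstacle. After establishing $\widehat\mu,\widehat\sigma^2\to0$ it simply asserts $\mathbb P(\hat k(T)>k(T))\to0$, and after noting $\nabla G^{k(T)}(i)>0$ for $i<k(T)$ it asserts $\mathbb P(\hat k(T)<k(T))\to0$, without explicitly reconciling the rates. So your worries in (i) and (ii) are legitimate but are not addressed more fully in the paper either. Also, your concern (i) is slightly overstated: by the definition in Equation \ref{khat} you only need the single increment $\nabla\widehat G(k(T))$ to fall at or below the threshold, not all $K(T)-k(T)$ spurious increments simultaneously.
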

\begin{proof}
	If $k(T)$ is bounded then the proof for the constant $k(T)$ version applies. Suppose that $\hat k(T)>k(T)$, then $\nabla\hat G^{k(T)}(k(T))>\mu(\nabla\hat G)+
    \frac{1}{2}\sqrt{\sigma^2(\nabla\hat G)}$. Letting $\nabla_{ij}^T=\frac{1}{2}[G^{k(T)}(i)-G^{k(T)}(j)]^2,$ we note the following inequalities:
    \begin{eqnarray*}
        \mu(\nabla\hat G)&=&\frac{\hat G^{k(T)}(K(T))-\hat G^{k(T)}(1)}{K(T)-1}\\
        &\le& \frac{d_T}{K(T)-1}+o(1)\\
        &\to& 0.
    \end{eqnarray*}
	\begin{eqnarray*}
	    \sigma^2(\nabla\hat G)&=&\frac{2}{(K(T)-1)(K(T)-2)}\sum_{1\le i<j\le K(T)-1}\frac{1}{2}\left[G^{k(T)}(j)-G^{k(T)}(i)\right]^2 + o(1)\\
	    &=&\binom{K(T)-1}{2}^{-1}\left[\sum_{1\le i<j\le k(T)}\hspace{-3ex}\nabla_{ij}^T \hspace{3ex} +
			\mathop{\sum_{1\le i\le k(T)<j\le K(T)-1}}\hspace{-6ex}\nabla_{ij}^T \hspace{4ex}+ 
			\sum_{k(T)<i<j\le K(T)-1}\hspace{-5ex}\nabla_{ij}^T\hspace{2ex}\right] + o(1)\\
	    &\le&\binom{K(T)}{2}^{-1}\left[\binom{k(T)}{2}d_T + k(T)(K(T)-k(T))d_T + \binom{K(T)-k(T)}{2}o(1)\right] + o(1)\\
	    &\to&0,
	\end{eqnarray*}
since $G^{k(T)}(i)=G^{k(T)}(k(T))$ for $i>k(T)$. Thus $\mathbb P(\hat k(T)>k(T))\to 0$ as $T\to\infty$.\par

Next suppose that $\hat k(T)<k(T)$. This implies that $\nabla\hat G^{k(T)}(\hat k(T))\le \mu(\nabla\hat G)+\frac{1}{2}\sqrt{\sigma^2(\nabla\hat G)}$. However, since 
$\nabla G^{k(T)}(i)>0$ for $i<k(T)$, $\mathbb P(\hat k(T)<k(T))\to 0$ as $T\to\infty$.    
\end{proof}

\begin{thm}\label{thm1}
    For all $\epsilon>0$, as $T\to\infty$, 
    $$\mathbb P\left(\hat k(T)=k(T), \mathcal{G}(k(T))\in\mathcal{B}_T(\epsilon)\right)\to 1.$$
\end{thm}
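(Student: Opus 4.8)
The plan is to obtain Theorem \ref{thm1} directly from Lemma \ref{lem1} and Lemma \ref{cpNumber} by a union bound, since the event appearing in the theorem is exactly the intersection of the two events whose probabilities those lemmas already control. Write $A_T=\{\hat k(T)=k(T)\}$ and $B_T(\epsilon)=\{\mathcal{G}(k(T))\in\mathcal{B}_T(\epsilon)\}$, so that the quantity of interest is $\mathbb{P}(A_T\cap B_T(\epsilon))$. By elementary set manipulation,
$$\mathbb{P}\!\left(A_T\cap B_T(\epsilon)\right)\ge 1-\mathbb{P}(A_T^c)-\mathbb{P}(B_T(\epsilon)^c),$$
so it suffices to show that each of the two subtracted terms vanishes as $T\to\infty$.

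For the first term I would invoke Lemma \ref{cpNumber}: under Assumptions \ref{asm0}--\ref{asm6} it gives $\mathbb{P}(\hat k(T)=k(T))\to 1$, equivalently $\mathbb{P}(A_T^c)\to 0$. For the second term I would invoke Lemma \ref{lem1} with the same $\epsilon$: it asserts $\mathbb{P}(\mathcal{G}(k(T))\in\mathcal{B}_T(\epsilon))\to 1$, i.e. $\mathbb{P}(B_T(\epsilon)^c)\to 0$. Substituting both limits into the displayed inequality and letting $T\to\infty$ yields $\mathbb{P}(A_T\cap B_T(\epsilon))\to 1$, which is the claim. The hypotheses required by the two lemmas coincide (Assumptions \ref{asm0}--\ref{asm6}), so no extra conditions are needed, and the pruning parameter of Lemma \ref{lemPrune} plays no role here since both lemmas, and hence the theorem, are stated for the exact dynamic-programming optima $\widehat G(k)$ and $\zeta_k(t)$.

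The only point that takes a little care — and the closest thing to an obstacle — is checking that the two convergence statements are genuinely simultaneous and that there is no circularity between them. Lemma \ref{lem1} holds the number of change points fixed at the truth $k(T)$ and shows the resulting optimal segmentation is location-consistent, while Lemma \ref{cpNumber} shows that the data-driven rule of Assumption \ref{asm5} recovers $k(T)$; each is derived from Assumptions \ref{asm0}--\ref{asm6} alone, with the $o(1)$ remainders (in particular the $k(T)\,o(f(T))$ term in Lemma \ref{lem1} and the $d_T$-driven terms in Lemma \ref{cpNumber}) controlled exactly as in those proofs, so neither statement presupposes the other. I would also remark that because the displayed bound is uniform over realizations and the two error probabilities are $o(1)$, the same argument in fact gives $\hat k(T)=k(T)$ and $\mathcal{G}(\hat k(T))\in\mathcal{B}_T(\epsilon)$ jointly with probability tending to one, so the theorem certifies the full output of the cp3o procedure rather than only its behavior conditional on knowing $k(T)$.
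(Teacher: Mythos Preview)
Your proposal is correct and follows essentially the same approach as the paper: apply the union bound $\mathbb P(A_T\cap B_T(\epsilon))\ge 1-\mathbb P(A_T^c)-\mathbb P(B_T(\epsilon)^c)$ and invoke Lemma~\ref{lem1} and Lemma~\ref{cpNumber} to drive each subtracted term to zero. Your additional remarks on non-circularity and the joint statement for $\mathcal G(\hat k(T))$ are fine elaborations but go beyond what the paper records.
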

\begin{proof}
    Using the results of Lemma \ref{lem1} and Lemma \ref{cpNumber} we have the following;
    \begin{eqnarray*}
        \mathbb P\left(\hat k(T)=k(T), \mathcal{G}(k(T))\in\mathcal{B}_T(\epsilon)\right) &  \ge & 1 - \mathbb P\left(\hat k(T)\neq k(T)\right) - 
																									\mathbb P\left(\mathcal{G}(k(T))\notin \mathcal{B}_T(\epsilon)\right)\\
		& \to & 1.
    \end{eqnarray*}
\end{proof}

\section{Pruning and Energy Statistics}\label{Sec:ecp3o}
The cp3o procedure introduced in Section \ref{Sec:prob-prune} can be applied with almost any goodness of fit measure $\widehat R(\cdot,\cdot)$. However, in order to ensure 
consistency for both the estimated change point locations, as well as the estimated number of change points, some restrictions must be enforced as outlined in Section 
\ref{Sec:Consistency}.\par

In this section we make use of a particular class of goodness of fit measures that allows for nonparametric change point analysis. These measures are indexed by $\alpha\in(0,2)
\footnote{The choice of $\alpha=2$ is allowed, however, in this case the goodness of fit measure would only be able to detect changes in mean.}$
and allow for the detection of \textit{any} type of distributional change. When a value of $\alpha$ is selected, the only distributional assumptions that are made are that 
observations are independent and that they all have finite absolute $\alpha$th absolute moments. This class of measures are based upon the energy statistic of \cite{rizzo05}, 
and we thus call the resulting procedure e-cp3o.\par

The e-cp3o procedure is a nonparametric procedure that makes use of an approximate test statistic and an exact search algorithm in order to locate change points. Computationally 
the e-cp3o procedure is comparable to other parametric/nonparametric change point methodologies that use approximate search algorithms. In the remainder of this section we 
give a brief review of E-Statistics, followed by their incorporation into the cp3o framework. Finally, we show that the resulting goodness of fit measure satisfies the conditions 
necessary for consistency.\par

\subsection{The Energy Statistic}\label{eStat}
As change point analysis is directly related to the detection of differences in distribution we consider the U-statistic introduced in \cite{rizzo05}. This statistic provides a 
simple way to determine whether the independent observations in two sets are identically distributed.\par

Suppose that we are given samples $\X_n=\{X_i: i=1,\dots, n\}$ and $\Y_m=\{Y_j: j=1,\dots, m\}$, that are independent iid samples from distributions $F_X$ and $F_Y$ respectively. 
Our goal is to determine if $F_X=F_Y$. We then define the following metric on the space of characteristic functions,
$$\mathcal D(X,Y|\alpha)=\int_{\mathbb R^d}|\phi_x(t)-\phi_y(t)|^2\omega(t|\alpha)\,dt,$$
i which $\phi_x$ and $\phi_y$ are the characteristic functions associated with distributions $F_X$ and $F_Y$ respectively. Also $\omega(t|\alpha)$ is a positive weight function 
chosen such that the integral is finite. By the uniqueness of characteristic functions, it is obvious that $\mathcal D(X,Y|\alpha)=0$ if and only if $F_X=F_Y$.\par

Another metric that can be considered is based on Euclidean distances. Let $(X',Y')$ be an iid copy of $(X,Y)$, then for $\alpha\in(0,2)$ define
\begin{equation}\label{energyStat}
    \mathcal E(X,Y|\alpha)=2E|X-Y|^\alpha - E|X-X'|^\alpha - E|Y-Y'|^\alpha.
\end{equation}
For an appropriately chosen weight function, 
$$\omega(t|\alpha)=\left(\frac{2\pi^{d/2}\Gamma(1-\alpha/2)}{\alpha 2^\alpha\Gamma((d+\alpha)/2)}|t|^{d+\alpha}\right)^{-1},$$
we have the following lemma.

\begin{lem}\label{metrics}
    For any pair of independent random variables $X$ and $Y$, and $\alpha\in(0,2)$ is such that $E(|X|^\alpha+|Y|^\alpha)<\infty$, then 
    $\mathcal E(X,Y|\alpha)=\mathcal D(X,Y|\alpha)$ 
    and $\mathcal E(X,Y|\alpha)\in[0,\infty)$. Moreover, $\mathcal E(X,Y|\alpha)=0$ if and only if $X$ and $Y$ are identically distributed.
\end{lem}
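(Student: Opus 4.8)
The plan is to reduce everything to the single integral identity
$$\int_{\mathbb{R}^d}\bigl(1-\cos\langle t,x\rangle\bigr)\,\omega(t|\alpha)\,dt = |x|^\alpha,\qquad x\in\mathbb{R}^d,\ \alpha\in(0,2),$$
which is precisely the assertion that the constant built into $\omega(t|\alpha)$ has been chosen correctly. I would either quote this from \cite{rizzo05} or derive it by rotational invariance (reducing to $x=|x|e_1$), the change of variables $t\mapsto t/|x|$ (which factors out $|x|^\alpha$ and leaves a purely dimensional constant), and a polar-coordinates evaluation of that constant in terms of Gamma functions. This identity is the crux; everything afterward is bookkeeping.

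Given the identity, I would expand $|\phi_x(t)-\phi_y(t)|^2$ using independent iid copies $X,X'\sim F_X$ and $Y,Y'\sim F_Y$. Writing $\phi_x(t)=E\,e^{i\langle t,X\rangle}$ and multiplying out the squared modulus, the imaginary parts cancel by symmetry of $X-X'$ and $Y-Y'$, and one obtains the pointwise-in-$t$ identity
$$|\phi_x(t)-\phi_y(t)|^2 = 2E\bigl(1-\cos\langle t,X-Y\rangle\bigr)-E\bigl(1-\cos\langle t,X-X'\rangle\bigr)-E\bigl(1-\cos\langle t,Y-Y'\rangle\bigr),$$
the three constant $1$'s cancelling since $2-1-1=0$. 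Multiplying by $\omega(t|\alpha)$ and integrating over $\mathbb{R}^d$, I would interchange $E$ and $\int dt$ in each of the three terms separately; since $1-\cos(\cdot)\ge 0$ and $\omega>0$ this is legitimate by Tonelli, and the inner integrals equal $|X-Y|^\alpha$, $|X-X'|^\alpha$, $|Y-Y'|^\alpha$ by the identity above. Taking expectations yields exactly $\mathcal D(X,Y|\alpha)=2E|X-Y|^\alpha-E|X-X'|^\alpha-E|Y-Y'|^\alpha=\mathcal E(X,Y|\alpha)$, all three expectations being finite because the hypothesis $E(|X|^\alpha+|Y|^\alpha)<\infty$ combines with the elementary bound $|u-v|^\alpha\le 2^{(\alpha-1)_+}\bigl(|u|^\alpha+|v|^\alpha\bigr)$; this also gives $\mathcal E(X,Y|\alpha)<\infty$.

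With $\mathcal E=\mathcal D$ in hand, nonnegativity is immediate: $\mathcal D(X,Y|\alpha)=\int_{\mathbb{R}^d}|\phi_x(t)-\phi_y(t)|^2\omega(t|\alpha)\,dt\ge 0$, so $\mathcal E(X,Y|\alpha)\in[0,\infty)$. For the equality characterization, $\mathcal E(X,Y|\alpha)=0$ forces $|\phi_x(t)-\phi_y(t)|^2\omega(t|\alpha)=0$ for Lebesgue-a.e.\ $t$; since $\omega(t|\alpha)>0$ for every $t\ne 0$, this means $\phi_x=\phi_y$ a.e., hence everywhere by continuity of characteristic functions, so $F_X=F_Y$ by the uniqueness theorem, and the converse is trivial. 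I expect the main obstacle to be the justification of the interchange of integration — concretely, seeing that the singularity $\omega(t|\alpha)\sim c|t|^{-(d+\alpha)}$ at the origin is integrable against $1-\cos\langle t,x\rangle=O(|t|^2)$, which forces the restriction $\alpha<2$, while the tail is harmless because $1-\cos\le 2$ and $\alpha>0$ — together with the bookkeeping needed to pin down the exact constant $C(d,\alpha)$ in the key integral identity. Everything downstream of that identity is routine.
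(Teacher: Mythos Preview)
Your argument is correct and is exactly the standard Sz\'ekely--Rizzo derivation. The paper does not give an independent proof at all: its entire proof reads ``See the appendix of \cite{rizzo05},'' and the argument you have sketched---the integral identity $\int_{\mathbb R^d}(1-\cos\langle t,x\rangle)\,\omega(t|\alpha)\,dt=|x|^\alpha$, the expansion of $|\phi_x-\phi_y|^2$ into cosine terms, Tonelli to exchange expectation and integration, and the uniqueness of characteristic functions---is precisely what that appendix contains, so you have effectively reproduced the cited proof rather than offered an alternative.
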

\begin{proof}
    See the appendix of \cite{rizzo05}.
\end{proof}

Theorem \ref{metrics} allows  for an intuitively simple empirical divergence measure. Let $\X_n$ and $\Y_m$ be as above, then we can define the empirical counterpart to 
Equation \ref{energyStat}
\begin{equation}\label{energyStatEmp}
    \widehat{\mathcal E}(\X_n,\Y_m;\alpha)=\frac{2}{mn}\sum_{i=1}^n\sum_{j=1}^m|X_i-Y_j|^\alpha - \binom{n}{2}^{-1}\!\!\!\sum_{1\le i<j\le n}|X_i-X_j|^\alpha 
    -\binom{m}{2}^{-1}\!\!\!\sum_{1\le i<j\le m}|Y_i-Y_j|^\alpha.
\end{equation}

\subsection{Incomplete Energy Statistic}\label{incompleteStatistic}
The computation of the U-statistics presented in Equation \ref{energyStatEmp} require $\mathcal O(n^2\vee m^2)$ calculations, which makes it impractical for large $n$ or $m$. 
We propose working with an approximate statistic that is obtained by using incomplete U-statistics. In the following formulation of the incomplete U-statistic let 
$\delta\in\{2,3,\dots,\lfloor\sqrt{T}\rfloor\}$.\par

Suppose that we divide a segment of our time series into two adjacent subseries, $\X_n=\{Z_a, Z_{a+1},\dots, Z_{a+n-1}\}$ and $\Y_m=\{Z_{a+n}, Z_{a+n+1},\dots, Z_{a+n+m-1}\},$ 
and define the following sets
\begin{eqnarray*}
	W_X^\delta & = &\{(i,j): a+n-\delta\le i<j<a+n\}\cup\bigcup_{i=0}^{n-\delta-1}\{(a+i,a+i+1)\}\\
	W_Y^\delta & = &\{(i,j): a+n\le i<j<a+n+\delta\}\cup\bigcup_{i=\delta-1}^{m-2}\{(a+n+i,a+n+i+1)\}\\
	B^\delta & = & \left(\{a+n-1,\dots,a+n-\delta\}\times\{a+n,\dots,a+n+\delta-1\}\right)\cup
					\left(\bigcup_{i=\delta+1}^{m\wedge n}\{(a+n-i,a+n+i-1)\}\right)
\end{eqnarray*}
The set $B^\delta$ aims at reducing the number of samples needed to compute the between sample distances. While the sets $W_X^\delta$ and $W_Y^\delta$ reduce the number of 
terms used for the within sample distances. When making this reduction the sets $W_X^\delta, W_Y^\delta,$ and $B^\delta$ consider all unique pairs within a $\delta$ window 
around the split that creates $\X_n$ and $\Y_m$. This point corresponds to a potential change point location and thus we use as much information about points close by to 
determine the empirical divergence.\par

We then define the incomplete U-statistic $\widetilde{\mathcal E}$ as
\begin{equation}\label{incUstat}
\widetilde{\mathcal E}(\X_n,\Y_m|\alpha,\delta) = \frac{2}{\#B^\delta}\sum_{(i,j)\in B^\delta}|X_i-Y_j|^\alpha - \frac{1}{\#W_X^\delta}\sum_{(i,j)\in W_X^\delta}|X_i-X_j|^\alpha
- \frac{1}{\#W_Y^\delta}\sum_{(i,j)\in W_Y^\delta}|Y_i-Y_j|^\alpha.
\end{equation}
Using this approximation greatly reduces our computational complexity from $\mathcal O(n^2\vee m^2)$ to $\mathcal O(n\vee m)$. \cite{nasari12} shows that a strong law of large 
numbers result holds for incomplete U-Statistics, and and thus $\widehat{\mathcal E}$ and $\widetilde{\mathcal E}$ have the same almost sure limit as $n\wedge m\to\infty$.\par

\subsection{The e-cp3o Algorithm}
We now present the goodness of fit measure that is used by the e-cp3o change point procedure. In addition, we show that the prescribed measure satisfies the necessary 
consistency requirements from Section \ref{Sec:Consistency}.\par

The e-cp3o algorithm uses an approximate test statistics combined with an exact search algorithm in order to identify change points. Its goodness of fit measure is given by 
the following weighted U-Statistic,
\begin{equation}\label{Rhat}
    \widehat{\mathcal R}(\X_n,\Y_m|\alpha)=\frac{mn}{(m+n)^2}\widehat{\mathcal E}(\X_n,\Y_m|\alpha).
\end{equation}
Or an approximation can be obtained by using its incomplete counterpart
\begin{equation*}
    \widetilde{\mathcal R}(\X_n,\Y_m|\alpha,\delta)=\frac{mn}{(m+n)^2}\widetilde{\mathcal E}(\X_n,\Y_m|\alpha,\delta).
\end{equation*}
By using Slutsky's theorem and a result of \cite{rizzo10} we have that if $F_X=F_Y$ then $\widehat{\mathcal R}(\X_n,\Y_m|\alpha)\stackrel{p}{\to}0$, and otherwise 
$\widehat{\mathcal R}(\X_n,\Y_m|\alpha)$ tends almost surely to a finite positive constant, provided that $m=\Theta(n)$ (this means that $n=\mathcal O(m)$ and 
$m=\mathcal O(n)$). In fact if $F_X=F_Y$ we have that $\widehat{\mathcal R}(\X_n,\Y_m|\alpha,\delta)\to 0$ almost surely.\par

In the case of $\widetilde{\mathcal R}(\X_n, \Y_m|\alpha)$, the result of \cite[Theorem 4.1]{oneil93} combined and Slutsky's theorem show that under equal distributions, 
$F_X=F_Y$, $\widetilde{\mathcal R}(\X_n, \Y_m|\alpha,\delta)\stackrel{p}{\to}0.$ Similarly, $\widehat{\mathcal R}(\X_n,\Y_m|\alpha,\delta)$ also tends towards 
a positive finite constant provided $m=\Theta(n).$ These properties lead to a very intuitive goodness of fit measure,
\begin{equation*}
    \widehat{G}(k|\alpha)=\max_{\tau_1<\tau_2<\dots<\tau_k}\ \sum_{j=1}^{k}\widehat{\mathcal R}(C_j,C_{j+1}|\alpha).
\end{equation*}
By using the dynamic programming approach presented by \cite{lavielle06}, the values $\widehat{G}(\ell|\alpha)$ for $\ell\le k$ can be computed in $\mathcal O(kT^3)$ 
instead of $\mathcal O(T^{k+2})$ operations. However, the $T^3$ term makes this an inadequate approach, so the procedure (e-cp3o) is implemented with the similarly 
defined goodness of fit measure $\widetilde{G}(k|\alpha,\delta)$, which allows for only $\mathcal O(kT^2)$ operations.\par

\subsubsection*{Consistency of e-cp3o}
We now show that the goodness of fit measure, $\widetilde R(\cdot,\cdot|\alpha,\delta)$, used by e-cp3o satisfies the conditions for a cp3o based procedure to generate 
consistent estimates. It is assumed that $\alpha$ has been chosen so that all of the $\alpha$th moments are finite. In the results below we will consider the goodness of fit 
measure $\widehat R$ based upon the complete U-Statistic, even though the e-cp3o procedure is based on its incomplete version $\widetilde R$. The reason for this is that 
$\widehat R$ and $\widetilde R$ have the same almost sure limits, and we are working in an asymptotic setting.
\begin{prop}
	Assumption \ref{asm4} is satisfied by the e-cp3o goodness of fit measure.
\end{prop}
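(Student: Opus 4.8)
The plan is to compute the almost sure limit, as a function of the candidate split $\tilde\gamma$, of the e-cp3o divergence $\widehat{\mathcal R}(A(\tilde\gamma),B(\tilde\gamma)\,|\,\alpha)=\frac{mn}{(m+n)^2}\widehat{\mathcal E}(A(\tilde\gamma),B(\tilde\gamma)\,|\,\alpha)$ from Equation~\ref{Rhat}, and to read $\Theta_\pi^\rho(\tilde\gamma\,|\,\gamma)$ off from it. Write $U=X(i_1,i_2)$ and $V=X(i_3,i_4)$. Since $\tilde\gamma$ is fixed in the open interval $(\pi,\rho)$, the sizes $n:=\#A(\tilde\gamma)\sim(\tilde\gamma-\pi)T$ and $m:=\#B(\tilde\gamma)\sim(\rho-\tilde\gamma)T$ both tend to infinity, so the strong law for $U$-statistics with the integrable kernel $|x-y|^\alpha$ applies; as remarked in Section~\ref{incompleteStatistic} it suffices to analyze the complete statistic $\widehat{\mathcal E}$, since it has the same almost sure limit as $\widetilde{\mathcal E}$. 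The one algebraic ingredient I would isolate first is the mixture identity: for independent $U,V$ and mixtures $W_j=p_jU+(1-p_j)V$, expanding the definition in Equation~\ref{energyStat} and using linearity of expectation gives $\mathcal E(W_1,W_2\,|\,\alpha)=(p_1-p_2)^2\,\mathcal E(U,V\,|\,\alpha)$.

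Next I would treat the two cases $\tilde\gamma\le\gamma$ and $\tilde\gamma\ge\gamma$. When $\tilde\gamma\le\gamma$, the set $A(\tilde\gamma)$ consists, up to a single boundary observation, of observations all distributed as $U$, while $B(\tilde\gamma)$ is a block $\sim U$ of relative length $\gamma-\tilde\gamma$ followed by a block $\sim V$ of relative length $\rho-\gamma$. Splitting the double sums in Equation~\ref{energyStatEmp} over $B(\tilde\gamma)$ into within-$U$, within-$V$, and cross pairs (of counts $\binom{m_1}{2}$, $\binom{m_2}{2}$, $m_1m_2$ with $m_1\sim pm$, $m_2\sim(1-p)m$) and applying the strong law to each piece shows the block fractions converge to $p=(\gamma-\tilde\gamma)/(\rho-\tilde\gamma)$ and $1-p=(\rho-\gamma)/(\rho-\tilde\gamma)$, so $\widehat{\mathcal E}(A(\tilde\gamma),B(\tilde\gamma)\,|\,\alpha)\to(1-p)^2\,\mathcal E(U,V\,|\,\alpha)$ almost surely. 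The case $\tilde\gamma\ge\gamma$ is symmetric, with the mixture now on the $A$-side carrying weight $q=(\gamma-\pi)/(\tilde\gamma-\pi)$ on $U$ and $B(\tilde\gamma)$ pure $V$, giving the limit $q^2\,\mathcal E(U,V\,|\,\alpha)$. Since $mn/(m+n)^2\to(\tilde\gamma-\pi)(\rho-\tilde\gamma)/(\rho-\pi)^2$, the product converges to
\[
  h(\tilde\gamma):=\lim_{T\to\infty}\widehat{\mathcal R}(A(\tilde\gamma),B(\tilde\gamma)\,|\,\alpha)
  =\frac{\mathcal E(U,V\,|\,\alpha)}{(\rho-\pi)^2}\times
  \begin{cases}
    \dfrac{(\tilde\gamma-\pi)(\rho-\gamma)^2}{\rho-\tilde\gamma}, & \pi<\tilde\gamma\le\gamma,\\[2ex]
    \dfrac{(\rho-\tilde\gamma)(\gamma-\pi)^2}{\tilde\gamma-\pi}, & \gamma\le\tilde\gamma<\rho,
  \end{cases}
\]
the two branches agreeing at $\tilde\gamma=\gamma$.

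To conclude I would set $R(X(i_1,i_2),X(i_3,i_4))$ equal to the fixed finite positive constant $\mathcal E(U,V\,|\,\alpha)$ --- positive by Lemma~\ref{metrics}, since the mixtures $X(i_1,i_2)$ and $X(i_3,i_4)$ are distinct, as is implicit in Assumption~\ref{asm4} where $\gamma$ separates two distributional regimes (any fixed positive normalization would serve equally well) --- and take $\Theta_\pi^\rho(\tilde\gamma\,|\,\gamma)=h(\tilde\gamma)/\mathcal E(U,V\,|\,\alpha)$, i.e.\ the bracketed coefficient above, which is a well-defined continuous map on $(\pi,\rho)$. Uniqueness of the maximizer at $\tilde\gamma=\gamma$ is then immediate: on $(\pi,\gamma]$ the coefficient equals the positive constant $(\rho-\gamma)^2/(\rho-\pi)^2$ times $(\tilde\gamma-\pi)/(\rho-\tilde\gamma)$, whose derivative $(\rho-\pi)/(\rho-\tilde\gamma)^2$ is strictly positive, so it is strictly increasing; on $[\gamma,\rho)$ it equals a positive constant times $(\rho-\tilde\gamma)/(\tilde\gamma-\pi)$, whose derivative $-(\rho-\pi)/(\tilde\gamma-\pi)^2$ is strictly negative, so it is strictly decreasing.

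I expect the main obstacle to be the bookkeeping behind the block decomposition: making rigorous that the $U$-statistic of a concatenation of independent homogeneous blocks has, in the limit, the same value as the $U$-statistic of an i.i.d.\ sample from the corresponding mixture, by counting the three pair types and invoking the strong law within each. The companion point to pin down is the normalization, so that $\Theta_\pi^\rho$ is genuinely well-defined rather than an indeterminate $0/0$ --- which is precisely why the strict positivity of $\mathcal E(U,V\,|\,\alpha)$ from Lemma~\ref{metrics} is needed. The monotonicity step itself is a one-line derivative computation once $h(\tilde\gamma)$ is in hand.
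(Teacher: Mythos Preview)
Your argument is correct and reaches the same destination as the paper, but by a more hands-on route. The paper simply invokes \cite[Theorem~1]{matteson13} to obtain the limit
\[
\widehat{\mathcal R}(A(\tilde\gamma),B(\tilde\gamma)\,|\,\alpha)\longrightarrow \tilde\gamma(1-\tilde\gamma)\,h(\tilde\gamma;\gamma)\,\mathcal E(U_1,U_2\,|\,\alpha),\qquad h(x;y)=\Bigl(\tfrac{y}{x}\mathbbm 1_{x\ge y}+\tfrac{1-y}{1-x}\mathbbm 1_{x<y}\Bigr)^2,
\]
sets $R=\gamma(1-\gamma)\mathcal E(U_1,U_2\,|\,\alpha)$ and $\Theta_0^1(\tilde\gamma\,|\,\gamma)=\tfrac{\tilde\gamma(1-\tilde\gamma)}{\gamma(1-\gamma)}h(\tilde\gamma;\gamma)$, and then asserts the unique maximizer without writing out the monotonicity check. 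Your piecewise formula is algebraically the same object (for $\pi=0,\rho=1$ your branches reduce exactly to $\tilde\gamma(1-\gamma)^2/(1-\tilde\gamma)$ and $(1-\tilde\gamma)\gamma^2/\tilde\gamma$), just normalized differently: you put all the $\gamma$-dependence into $\Theta$ and take $R=\mathcal E(U,V\,|\,\alpha)$, which is an equally valid factorization under Assumption~\ref{asm4}. What your approach buys is self-containment --- you derive the limit from first principles via the block decomposition and the clean mixture identity $\mathcal E(W_1,W_2\,|\,\alpha)=(p_1-p_2)^2\mathcal E(U,V\,|\,\alpha)$, you carry the general endpoints $\pi,\rho$ throughout rather than tacitly specializing to $(0,1)$, and you actually exhibit the derivative computation that pins down the unique maximizer. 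The paper's version is terser because it outsources the limit to an external reference; yours would stand alone.
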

\begin{proof}
	Using the result of \cite[Theorem 1]{matteson13} we have that 
	$$\widehat R(A(\tilde\gamma), B(\tilde\gamma)|\alpha)\to \tilde\gamma(1-\tilde\gamma)h(\tilde\gamma;\gamma){\mathcal E}(U_1, U_2|\alpha).$$
	Such that $U_1\sim X(i_1,i_2)$ $U_2\sim X(i_3,i_4),$ and $h(x;y) =\left(\frac{y}{x}\mathbbm{1}_{x\ge y} + \frac{1-y}{1-x}\mathbbm{1}_{x<y}\right)^2.$ Therefore, 
	$R(X(i_1,i_2),X(i_3,i_4))=\gamma(1-\gamma){\mathcal E}(U_1,U_2|\alpha)$ and $\Theta_0^1(\tilde\gamma|\gamma)=\frac{\tilde\gamma(1-\tilde\gamma)}{\gamma(1-\gamma)}h(\tilde\gamma;\gamma)$, 
	which can be shown to have a unique maximizer at $\tilde\gamma=\gamma$.
\end{proof}

\begin{prop}
	The portion of Assumption \ref{asm6} about $\left\{G^m(r)\right\}_{r=1}^m$ holds for the e-cp3o goodness of fit measure.
\end{prop}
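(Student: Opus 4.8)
The statement asserts that, for every fixed $q$, the population objective values $G^q(1),G^q(2),\dots,G^q(q)$ are strictly increasing when $R$ is the e-cp3o divergence. By Equation \ref{Rhat} and the preceding proposition, the population limit of this divergence between two mixtures $A$ and $B$ has the form $R(A,B\,|\,\alpha)=w_{AB}\,\mathcal{E}(A,B\,|\,\alpha)$, where $w_{AB}=|A||B|/(|A|+|B|)^2\in(0,1/4]$ is the limiting value of the weight $mn/(m+n)^2$ and, by Lemma \ref{metrics}, $\mathcal{E}(A,B\,|\,\alpha)\ge 0$ with equality iff $A$ and $B$ are identically distributed. The plan is to show $G^q(r+1)>G^q(r)$ for each $1\le r<q$ by producing an $(r+1)$-point configuration whose objective strictly exceeds $G^q(r)$; since $G^q(r+1)$ is the supremum of the objective over all $(r+1)$-point configurations, this suffices and the chain of strict inequalities follows.

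Fix $r<q$ and let $s_1<\dots<s_r$ attain $G^q(r)$, splitting the components $F_0,\dots,F_q$ into $r+1$ consecutive blocks. Since $q+1>r+1$, some block — say the $j$-th — contains at least two components, and by $F_i\ne F_{i+1}$ it admits an internal component cut into sub-mixtures $P$ and $Q$ that are not identically distributed (if isolating the first component $F_a$ fails, the block mixture must equal $F_a$, and then the cut after its first two components cannot produce equal halves without forcing $F_a\stackrel{d}{=}F_{a+1}$). Replacing the $j$-th block $W$ — with left/right neighbours $L$ and $N$ when they exist — by the pair $P,Q$ yields an $(r+1)$-point configuration whose objective exceeds $G^q(r)$ by
\begin{equation*}
\Delta \;=\; R(L,P\,|\,\alpha)+R(P,Q\,|\,\alpha)+R(Q,N\,|\,\alpha)-R(L,W\,|\,\alpha)-R(W,N\,|\,\alpha),
\end{equation*}
with the obvious deletions when $j=1$ or $j=r+1$, so it remains to prove $\Delta>0$.

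The key tool is the quadratic expansion of the energy distance along a mixture: if $W=\pi P+(1-\pi)Q$ with $0<\pi<1$, then for any fixed $L$,
\begin{equation*}
\mathcal{E}(L,W\,|\,\alpha)\;=\;\pi\,\mathcal{E}(L,P\,|\,\alpha)+(1-\pi)\,\mathcal{E}(L,Q\,|\,\alpha)-\pi(1-\pi)\,\mathcal{E}(P,Q\,|\,\alpha),
\end{equation*}
which is immediate from Equation \ref{energyStat} using linearity of $E|L-W|^\alpha$ in the mixing law of $W$ and the quadratic expansion of $E|W-W'|^\alpha$. Substituting this identity and its mirror image for $\mathcal{E}(W,N\,|\,\alpha)$ into $\Delta$, the coefficient of $\mathcal{E}(L,P\,|\,\alpha)$ becomes $w_{LP}-\pi w_{LW}$ and that of $\mathcal{E}(Q,N\,|\,\alpha)$ becomes $w_{QN}-(1-\pi)w_{WN}$, both nonnegative by a one-line monotonicity check on the weights (the complementary-size ratios $(|L|+|W|)^2/(|L|+\pi|W|)^2$ and $(|W|+|N|)^2/((1-\pi)|W|+|N|)^2$ exceed $1$), while $R(P,Q\,|\,\alpha)=w_{PQ}\,\mathcal{E}(P,Q\,|\,\alpha)>0$ contributes a strictly positive gain. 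The only terms pushing the wrong way are the ``cross'' terms $\mathcal{E}(L,Q\,|\,\alpha)$ and $\mathcal{E}(P,N\,|\,\alpha)$ between pieces that are no longer adjacent; these I would bound via the triangle inequality for the energy distance — its square root equals, by Lemma \ref{metrics}, a weighted $L^2$ distance between characteristic functions — together with the fact that adjacent true components are distinct, which excludes the degenerate configurations (such as $L\stackrel{d}{=}P$) in which a cross term could swamp the gain.

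The main obstacle is precisely this last step: verifying that the re-weighting plus the newly created non-adjacent comparisons can never cancel the strictly positive contribution $R(P,Q\,|\,\alpha)$ of the genuinely new split. I expect this to require a short case analysis — on whether the neighbouring blocks are comparable in size to the $j$-th block, and a choice of cut boundary keeping $\pi$ bounded away from $0$ and $1$ when the component structure permits — after which $\Delta>0$ follows and hence $G^q(r+1)\ge G^q(r)+\Delta>G^q(r)$. Everything else (the pigeonhole, the existence of a non-degenerate cut, and the algebraic identities) is routine.
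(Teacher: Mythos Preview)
Your route differs from the paper's in a way that creates the very obstacle you flag at the end. The paper works entirely in the $L^2(\omega)$ characteristic-function representation from Lemma~\ref{metrics}, drops the weight factor $mn/(m+n)^2$ altogether, and argues only the step $G^m(1)<G^m(2)$ explicitly (claiming the general step is identical). Writing the right-hand mixture as $J=\beta G+(1-\beta)H$, it notes that $\beta\mapsto\|\phi_F-\beta\phi_G-(1-\beta)\phi_H\|_{L^2(\omega)}^2$ is convex, so $G^m(1)=\mathcal E(F,J|\alpha)$ is bounded by its value at an endpoint $\beta\in\{0,1\}$; the two extremal cases are then dispatched separately, $\beta=1$ trivially and $\beta=0$ via the triangle inequality for the metric. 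That convexity reduction to the endpoints is precisely the device that lets the paper avoid the non-adjacent cross terms you are struggling with. You instead expand $\Delta$ through the quadratic mixture identity, keep the weights $w_{AB}$, and thereby generate the terms $\mathcal E(L,Q|\alpha)$ and $\mathcal E(P,N|\alpha)$ with strictly negative coefficients $-(1-\pi)w_{LW}$ and $-\pi w_{WN}$.

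The gap you acknowledge is genuine and is not closed by the sketch you give. Applying the triangle inequality for $\sqrt{\mathcal E}$ to bound, say, $\mathcal E(L,Q)\le 2\mathcal E(L,P)+2\mathcal E(P,Q)$ just transfers the damage: the coefficient of $\mathcal E(L,P)$ becomes $w_{LP}-\pi w_{LW}-2(1-\pi)w_{LW}$, which need not be nonnegative, and similarly the gain on $\mathcal E(P,Q)$ can be overwhelmed. A ``case analysis on block sizes'' or ``keeping $\pi$ bounded away from $\{0,1\}$'' does not address the difficulty either, because the adversarial configuration---$L$ close to $P$ and far from $Q$ in energy distance---is dictated by the distributions $F_i$, not by segment lengths or the mixture coefficient. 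To finish along your lines you would need a qualitatively different control of the cross terms; the paper's convexity-to-endpoints argument sidesteps them entirely, at the cost of ignoring the weight factors you carefully retained.
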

\begin{proof}    
    We begin by showing that $G^m(1)<G^m(2)$. Suppose the first change point partitions the time 
    series into two segments, one where observations are distributed according to $F$ and another where they are distributed according to $J$. Now suppose that $J$ is a 
    created by a linear mixture of the distributions $G$ and $H$ (which may themselves be mixture distributions). Suppose that the second change point is positioned so as 
    to separate these distributions, $G$ and $H.$ Let random variables $X, Y,$ and $Z$ be such that $X\sim F, Y\sim G,$ and $Z\sim H$. Then we have that
    \begin{equation*}
        G^m(1) = \int_{\mathbb R^d}|\phi_x(t)-\phi_y(\beta t)\phi_z((1-\beta)t)|^2\omega(t|\alpha)\,dt
    \end{equation*}
	where $\beta$ is the mixture coefficient used to create the distribution $J$. It is clear that this will be maximized either when $\beta=0$ or $\beta=1$, in either case we 
	will show that the obtained value is bounded above by
	\begin{equation*}
	    G^m(2)=\int_{\mathbb R^d}|\phi_x(t)-\phi_y(t)|^2\omega(t|\alpha)\,dt + \int_{\mathbb R^d}|\phi_y(t)-\phi_z(t)|^2\omega(t|\alpha)\,dt.
	\end{equation*}
	
	\noindent\underline{Case $\beta=1$:} In this setting the value of $G^m(1)$ is equal to the first term in the definition of $G^m(2)$, and since the 
	distributions $G$ and $H$ are distinct, the second term is strictly positive. Thus $G^m(1)<G^m(2)$.	
	
	\noindent\underline{Case $\beta=0$:} In this case $G^m(1)=\int|\phi_x(t)-\phi_z(t)|^2\omega(t|\alpha)dt$. However, since we have a metric, the 
	triangle inequality immediately shows that $G^m(1)<G^m(2)$.\par
	
	In the above setting the location of the first change point was held fixed when the second was identified. This need not be the optimal way to partition the time series into 
	three segments. Thus since this potentially suboptimal segmentation results in an upper bound for $G^m(1)$ it follows that the optimal segmentation will 
	also bound $G^m(1)$.\par
	
	The argument to show that $G^m(r)<G^m(r+1)$ for $r=2,\dots, m-1$ is identical.
\end{proof}

\begin{prop}
	The portion of Assumption \ref{asm6} about $\left\{\hat G^{k(T)}(r)\right\}_{r=1}^{k(T)}$ holds for the e-cp3o goodness of fit measure.
\end{prop}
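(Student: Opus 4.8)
The plan is to transfer the strict monotonicity of the population sequence $\{G^{k(T)}(r)\}_{r=1}^{k(T)}$, just established in the preceding proposition, down to the finite-sample estimates $\{\hat G^{k(T)}(r)\}_{r=1}^{k(T)}$. Recall that $\hat G^{k(T)}(r)$ is the maximum, over all ways of grouping the $k(T)$ true segments into $r$ consecutive ``super-segments'', of a sum of $r$ terms $\widehat{\mathcal R}$ evaluated on adjacent super-segments, and that $G^{k(T)}(r)$ is the analogous maximum with each $\widehat{\mathcal R}$ replaced by its almost sure limit $R$. So it suffices to bound $\hat G^{k(T)}(r)-G^{k(T)}(r)$ uniformly in $r$ and to check that this bound is smaller than the gaps $\nabla G^{k(T)}(r)=G^{k(T)}(r+1)-G^{k(T)}(r)$ for all large $T$.

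First I would record the pointwise convergence. By Assumption \ref{asm1} the minimum true segment length $\lambda_T\to\infty$, so every super-segment entering the definition of $\hat G^{k(T)}(r)$ has cardinality tending to infinity; hence, by the strong law of large numbers for incomplete $U$-statistics cited in Section \ref{incompleteStatistic} together with the limit theory behind Equation \ref{Rhat}, each $\widehat{\mathcal R}$ term converges almost surely to its population counterpart, and Assumption \ref{asm0} supplies the rate $|\hat R_{ab}-R_{ab}|=o(f(\#A\wedge\#B))$ over the mixture-index pairs $(a,b)$ that can appear. Since every admissible segmentation contributes exactly $r\le k(T)$ such terms,
\begin{equation*}
	\bigl|\hat G^{k(T)}(r)-G^{k(T)}(r)\bigr|\;\le\;k(T)\cdot\max_{a,b}\bigl|\hat R_{ab}-R_{ab}\bigr|\;=\;k(T)\,o(f(T))\;=\;o(1),
\end{equation*}
the last equality being Assumption \ref{asm2}. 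Consequently $\nabla\hat G^{k(T)}(r)=\nabla G^{k(T)}(r)+o(1)$ uniformly in $r<k(T)$.

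The main obstacle is then to show that the positive increments $\nabla G^{k(T)}(r)$ are not swamped by this $o(1)$ error, i.e.\ that they are bounded away from $0$ uniformly in $r$ and $T$. I would obtain this by a refinement argument in the spirit of the previous proposition: starting from an optimal $(r+1)$-super-segmentation and merging two adjacent super-segments deletes one between-block divergence term in full, while, by the triangle inequality for the metric $\sqrt{\mathcal D}$ on characteristic functions (Lemma \ref{metrics}), the merged within/adjacent contributions can only decrease; hence $\nabla G^{k(T)}(r)$ is at least the (strictly positive) energy distance across the genuine distributional change separating the two merged blocks, and, being a maximum over admissible merges, at least the largest such single-change energy distance among the $k(T)$ truths. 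Combining this lower bound with the uniform $o(1)$ deviation from the preceding paragraph yields $\hat G^{k(T)}(1)<\hat G^{k(T)}(2)<\cdots<\hat G^{k(T)}(k(T))$ for every $T$ exceeding some $T_0$, which is exactly the portion of Assumption \ref{asm6} in question. The delicate point to watch is the interplay between the count of mixture pairs and the convergence rate $f$, so that the step ``$k(T)\cdot o(f(T))=o(1)$'' is genuinely uniform; this is where Assumption \ref{asm2} is used, and for the energy statistic the concrete $\mathcal O(\lambda_T^{-1/2})$-type rate makes the uniformity transparent.
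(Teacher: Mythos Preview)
Your route is genuinely different from the paper's. You try a ``population limit plus uniform error'' transfer: control $|\hat G^{k(T)}(r)-G^{k(T)}(r)|$ by $k(T)\,o(f(T))=o(1)$ via Assumptions~\ref{asm0} and~\ref{asm2}, and then argue that the population increments $\nabla G^{k(T)}(r)$ are bounded below uniformly in $r$ and $T$. The paper instead exploits a structural feature of the energy statistic: the V-statistic version of $\widehat{\mathcal E}$ is always nonnegative (so the sample-level argument of the preceding proposition carries over to $\hat G_V$), the U-statistic version differs from it by at most a constant multiple of $1/T$, and combining this with $0<G^{k(T)}(r)<G^{k(T)}(r+1)$ and $d_T/T\to0$ yields the conclusion for large $T$. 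The paper's device thus never needs a quantitative lower bound on the population gaps.

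Your step~2 has a real gap. First, the merging argument points the wrong way: starting from the optimal $(r{+}1)$-super-segmentation and merging two blocks produces a candidate $r$-segmentation, so its value is \emph{at most} $G^{k(T)}(r)$; the inequality you derive, ``merged value $\le G^{k(T)}(r+1)-R(\cdot,\cdot)$'', therefore combines with ``merged value $\le G^{k(T)}(r)$'' to give an \emph{upper} bound on $\nabla G^{k(T)}(r)$, not a lower one. The correct direction is to start from the optimal $r$-segmentation and \emph{refine} it, as in the preceding proposition. Second, and more importantly, even that refinement argument only yields $\nabla G^{k(T)}(r)\ge c_T$ with $c_T$ controlled by the smallest energy distance $\mathcal E(F_j,F_{j+1}\mid\alpha)$ among the $k(T)$ true changes (not the largest, since the optimal $r$-segmentation will already have absorbed the weakest boundaries). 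Nothing in Assumptions~\ref{asm0}--\ref{asm4} forces $\inf_j\mathcal E(F_j,F_{j+1}\mid\alpha)>0$ as $k(T)\to\infty$, so your $o(1)$ error need not be dominated by $c_T$ for all large $T$. This is precisely the difficulty the paper sidesteps by working through the V-statistic nonnegativity and the $O(1/T)$ U--V discrepancy rather than through a gap-vs-error comparison.
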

\begin{proof}
	In the paper \cite{rizzo05}, the empirical measure used for 
    the statistic $\widehat{\mathcal E}$ is based upon V-statistics, while we instead use U-statistics. The use of V-statistics ensures that the statistic will always 
    have a nonnegative value. This isn't  the case when using U-statistics, but the difference in their value can be bounded by a constant multiple of $\frac{1}{T}$. 
    Combining this with the fact that $0<G^{k(T)}(r)<G^{k(T)}(r+1)$, and $\frac{d_T}{T}\to 0$, we conclude that for $T$ large enough the version of the statistics 
    based on U-statistics will also produce nonnegative values. Therefore for $T$ large enough $\hat G^{k(T)}(r)<\hat G^{k(T)}(r+1)$.
\end{proof}

\section{Simulation Study}\label{Sec:Simulations}
We now show the effectiveness of our methodology by considering a number of simulation studies. The goal of these studies is to demonstrate that the e-cp3o procedure is able to 
perform reasonably well in a variety of settings. In these studies we examine both the number of estimated change points as well as their estimated locations.\par

To assess the performance of the segmentation obtained from the e-cp3o procedure we use Fowlkes and Mallows' adjusted Rand index \citep{fowlkes83}. This value is calculated by 
comparing a segmentation based upon estimated change point locations to the known true segmentation. The index takes into account both the number of change points as well as their 
locations, and lies in the interval $[0,1]$, where it is equal to $1$ if and only if the two segmentations are identical.\par

For each simulation study we apply various methods to 100 randomly generated time series. We then report the average running time in seconds, the average adjusted Rand value, 
and the average number of estimated change points.\par

As the simulations in the following sections will demonstrate, the e-cp3o procedure does not always generate the best running time or average Rand values. However, 
in every setting it generates results that are either better or comparable to almost all other competitors, when accuracy and speed are viewed together. For 
this reason we would advocate the use of the e-cp3o procedure as a general purpose change point algorithm, especially for small to moderate length time series.\par

To perform the probabilistic pruning introduced in Section \ref{Sec:prob-prune} the value of $\Gamma_\epsilon$ must be specified. In our implementation we obtain an 
estimate of $\Gamma_\epsilon$ in the following way. We uniformly draw $\mathcal O\!\!\left(\frac{1}{\epsilon}\right)$ random samples from the set 
$$\left\{(v,t,s,u): v<t<s<u\mbox{ and }\min\{t-v,s-t,u-s\}\ge\delta\right\}.$$
For each sample we calculate 
$$\widetilde R(Z_{v+1}^t,Z_{t+1}^u;\alpha) - \widetilde R(Z_{v+1}^t,Z_{t+1}^s;\alpha) - \widetilde R(Z_{t+1}^s,Z_{s+1}^u;\alpha),$$
and then set $\Gamma_\epsilon$ equal to the $1-\epsilon$ quantile of these quantities. Any other sampling approach could be used to obtain a value for $\Gamma_\epsilon$ 
as long it satisfies the probabilistic criterion.\par

\subsection{Univariate Simulations}\label{subsec:UniSim}
We begin our simulation study by comparing the e-cp3o procedure to the E-Divisive and PELT procedures. These two procedures are implemented in the \texttt{ecp} \citep{ecp14} and 
\texttt{changepoint} R packages respectively. This set of simulations consist of independent Gaussian observations which undergo changes in their mean and variance. 
The distribution parameters were chosen so that $\mu_j\stackrel{iid}{\sim}U(-10,10)$ and $\sigma^2_j\stackrel{iid}{\sim}U(0,5)$. 
For each analyzed time series all of the different change point procedures were run with their default parameter values. For E-Divisive and e-cp3o this corresponds 
to $\alpha=1$. And for e-cp3o the minimum segment size was set to 30 observations (corresponding to $\delta=29$), and a value of $\epsilon=0.01$ 
is used for the probabilistic pruning. Since in this simulation study the number of change points increased with the time series length, the value of $K(T)$ would also change. 
The results of these simulations are in Table \ref{growthSim}, which also includes additional information about the time series and upper limit $K(T)$.
\begin{table}[t]
\centering
\begin{tabular}{cddd}
\hline
\multicolumn{1}{c}{} & \multicolumn{1}{c}{PELT} & \multicolumn{1}{c}{E-Div} & \multicolumn{1}{c}{e-cp3o}\\ \hline
\multicolumn{1}{c}{} & \multicolumn{3}{c}{T=400, k(T)=3, K(T)=9}\\ \cline{2-4}
Rand                                 & 0.884_{7\e{-3}}& 0.987_{2\e{-3}}& 0.937_{10^{-2}}\\
\# of cps                            & 9.150_{0.4}& 3.070_{3\e{-2}}& 2.660_{8\e{-2}}\\
Time(s)                              & 0.003_{5\e{-5}}& 9.199_{5\e{-2}}& 0.150_{7\e{-4}}\\ \hline

\multicolumn{1}{c}{} & \multicolumn{3}{c}{T=1650, k(T)=10, K(T)=50}                                                                       \\ \cline{2-4}
Rand                                 & 0.953_{3\e{-3}}& 0.992_{7\e{-4}}& 0.940_{5\e{-3}}\\
\# of cps                            & 16.690_{0.4}& 10.050_{2\e{-2}}& 9.390_{7\e{-2}}\\
Time(s)                              & 0.009_{7\e{-5}}& 239.263_{0.7}& 3.542_{3\e{-2}}\\ \hline
\end{tabular}
\caption{\label{growthSim}
Results of the first univariate simulation from Section \ref{subsec:UniSim} with different time series lengths. The true number of change points is given by 
k(T) and K(T) the upper limit used by e-cp3o. The table contains average values over 100 replicates, with standard error as subscripts.
}
\end{table}

As can be seen from Table \ref{growthSim}, better results are obtained by combining an exact test statistic with an approximate search algorithm. But these gains in segmentation 
quality are rather small. Thus, because of the increase in speed and small loss in segmentation quality, we would argue that the e-cp3o procedure should be preferred over the 
E-Divisive. The PELT procedure was much faster, but the e-cp3o procedure was able to generate segmentations that were similar in quality as measured by the 
adjusted Rand index.\par

The next set of simulations also compares to a nonparametric procedure from the \texttt{npcp} R package. This procedure, like the e-cp3o, 
is designed to detect changes in the joint distribution of multivariate time series. More information about this procedure, which we will denote by NPCP-F, is given in Section 
\ref{subsec:MultiSim}. Time series in this simulation study contain two changes in mean followed by a change in tail index. The changes in mean correspond to the data 
transitioning from a standard normal distribution to a $N(3,1)$ and then back to standard normal. The tail index change is caused by a transition to a t-distribution with $2.01$ 
degrees of freedom. We expect that all three methods will be able to easily detect the mean changes and will have a more difficult time detecting the change in tail index. As 
with the previous set of simulations, all procedures are run with their default parameter values. Results for this set of simulations can be found in Table \ref{normSim}. 
Surprisingly, in this set of simulations the e-cp3o procedure was not only significantly faster than the E-Divisive and NPCP-F, but also managed to generate 
slightly better segmentations on average.\par

\begin{minipage}{\linewidth}
	\begin{minipage}{.45\linewidth}
		\begin{table}[H]
\begin{tabular}{lddd}
\hline
\multicolumn{1}{c}{} & \multicolumn{1}{c}{NPCP-F} & \multicolumn{1}{c}{E-Div} & \multicolumn{1}{c}{e-cp3o}\\ \hline
\multicolumn{1}{c}{} & \multicolumn{3}{c}{T=400, k(T)=3, K(T)=9} \\ \cline{2-4}
Rand                                 & 0.820_{5\e{-3}}& 0.828_{5\e{-3}}& 0.874_{7\e{-3}}\\
\# of cps                            & 2.280_{6\e{-2}}& 2.200_{5\e{-2}}& 2.430_{5\e{-2}}\\
Time                                 & 4.790_{2\e{-2}}& 6.726_{5\e{-2}}& 0.176_{10^{-3}}\\ \hline

\multicolumn{1}{c}{} & \multicolumn{3}{c}{T=1600, k(T)=3, K(T)=9}                                                                      \\ \cline{2-4}
Rand                                 & 0.839_{6\e{-3}}& 0.864_{8\e{-3}}& 0.917_{5\e{-3}}\\
\# of cps                            & 2.370_{6\e{-2}}& 2.480_{7\e{-2}}& 2.920_{3\e{-2}}\\ 
Time                                 & 71.772_{0.3}& 143.207_{1.1}& 2.392_{4\e{-2}}\\ \hline
\end{tabular}
\caption{\label{normSim}
Simulation results for time series with mean and tail index changes. The subscripts indicate the standard errors for each value.
}
\end{table}

	\end{minipage}
	\hspace{0.05\linewidth}
	\begin{minipage}{.45\linewidth}
		\begin{figure}[H]
		    \includegraphics[width=\linewidth]{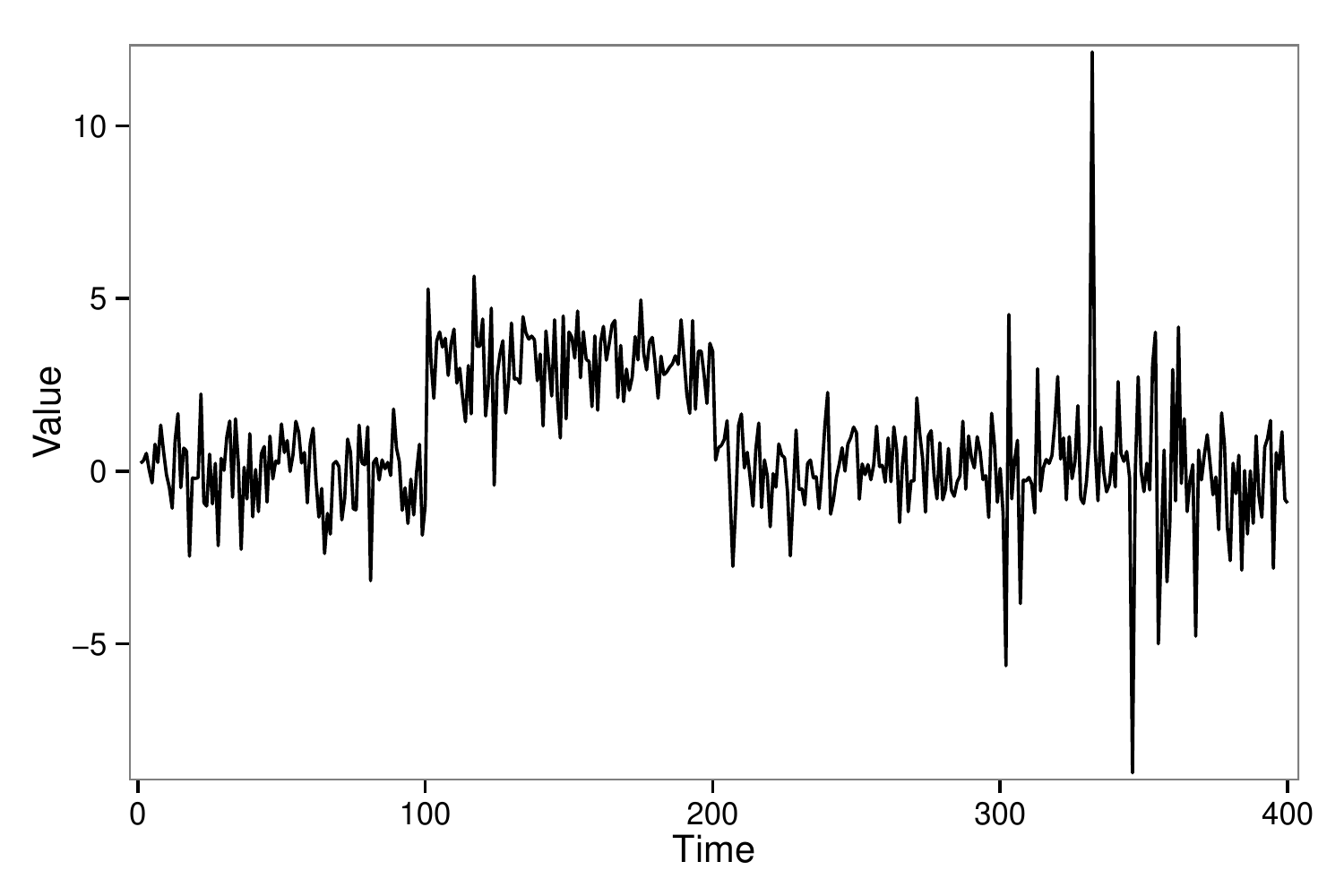}
		    \caption{\label{fig:sim1Pic} Example of time series with changes in mean and tail index. Mean changes occur at times 100 and 
		    200, while the tail index change is at time 300.}
		\end{figure}
	\end{minipage}
\end{minipage}
\par \ 

These two simulation studies on univariate time series show that the e-cp3o procedure performs well when compared to other parametric and nonparametric 
change point algorithms. The first set of simulations showed that it generated segmentations whose quality is comparable to that of an efficient 
parametric procedure when its parametric assumptions were satisfied. While the second set of simulations showed that it is able to handle more subtle 
distributional changes, such as a change in tail behavior. The flexibility of the e-cp3o method allows for it to be used when parametric assumptions are met, 
as well as in settings where they aren't sure to be satisfied.\par

\subsection{Multivariate Simulations}\label{subsec:MultiSim}
We now examine the performance of the e-cp3o procedure when applied to a multivariate time series. Since a change in mean can be seen as a change in a marginal distribution we 
could just apply any univariate method to each dimension of the dataset. For this reason we will examine a more complex type of distributional change. In this simulation 
the distributional change will be due to a change in the copula function \citep{sklar59}, while the marginal distributions remain unchanged. Since the PELT procedure as 
implemented in the \texttt{changepoint} package only performs marginal analysis it is not suited for this setting, and will thus not be part of our comparison. 
We instead consider a method proposed by \cite{gombay99} and implemented in the {\texttt\textbf{R}} package \texttt{npcp} by \cite{holmes13}. This package provides 
two methods that can be used in this setting. One that looks for any change in the joint distribution (NPCP-F) and one designed to detect changes in the copula function 
(NPCP-C).\par

For a given set of marginal distributions, the copula function is used to model their dependence. Thus a change in the copula function reflects a change in the dependence 
structure. This is of particular interest in finance where portfolios of dependent securities are typical \citep{guegan10}.\par

In this simulation we consider a two dimensional process where both marginal distributions are standard normal. While the marginal distributions remain static, the copula 
function evolves over time. For this simulation the copula undergoes two changes. Initially it is a Clayton copula and then changes to the independence copula and finally 
becomes a Gumbel copula. The density function for each of the used copulas is provided in Table \ref{table:copula} and simulation results in 
Table \ref{table:copulaSim}.

\begin{table}[ht]
\begin{center}
\begin{tabular}{lc}
\hline
Copula & Density $c(u,v)$\\
\hline
Clayton & $\left(\max\{u^{-2.8}+v^{-2.8}-1,\ 0\}\right)^{-5/14}$\\
Independence & $uv$\\
Gumbel & $\exp\left\{-\left[\left(-\log(u)\right)^{2.8} + \left(-\log(v)\right)^{2.8}\right]^{5/14}\right\}$
\end{tabular}
\end{center}
\caption{\label{table:copula}
The densities for the copula functions used in the multivariate simulations.}
\end{table}

\begin{figure}
	\centering
    \begin{subfigure}[b]{.3\textwidth}
		\includegraphics[width=\textwidth]{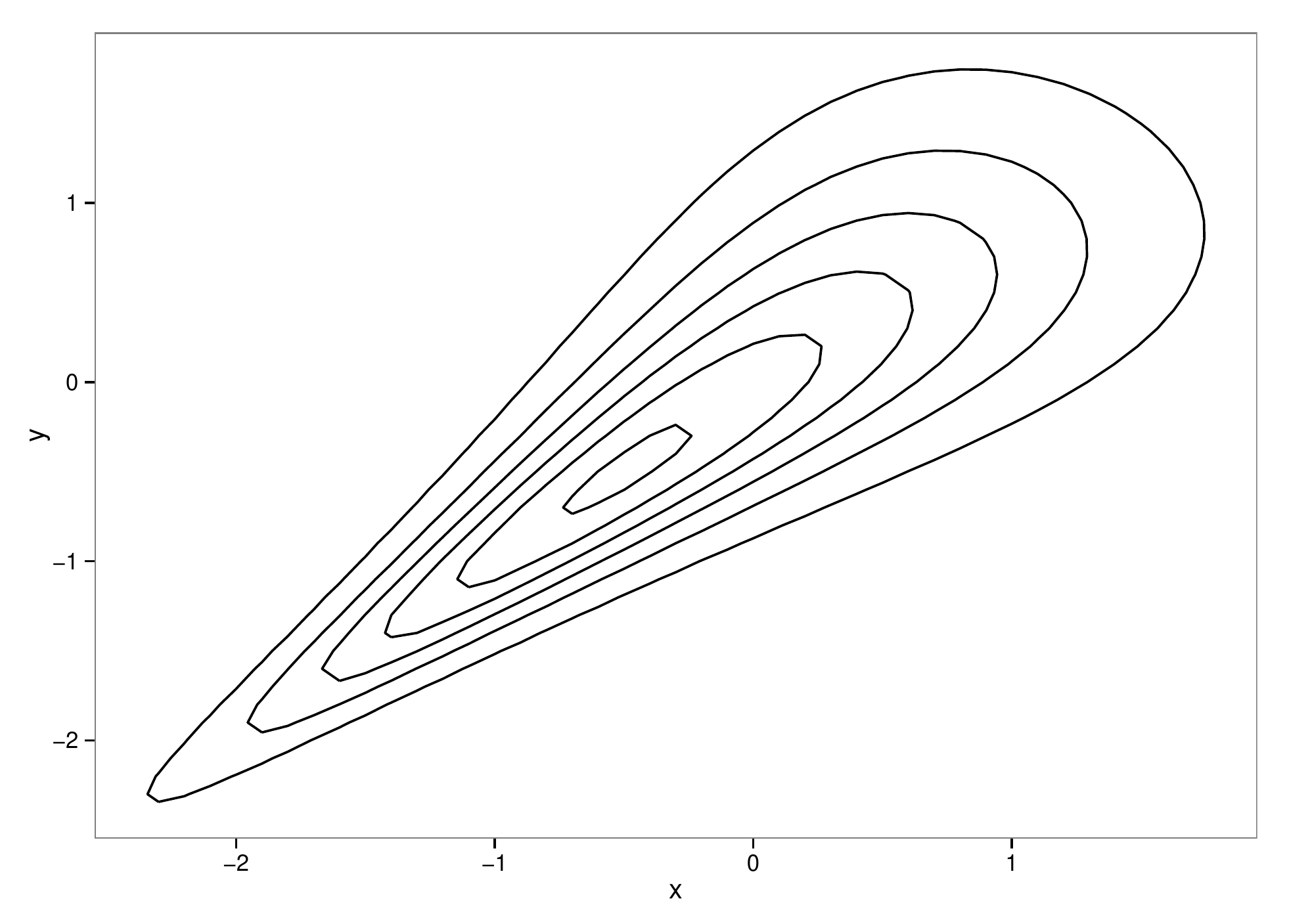}
		\caption{Clayton}
		\label{fig:claytonPic}
	\end{subfigure}
	\hspace{\fill}
	\begin{subfigure}[b]{.3\textwidth}
		\includegraphics[width=\textwidth]{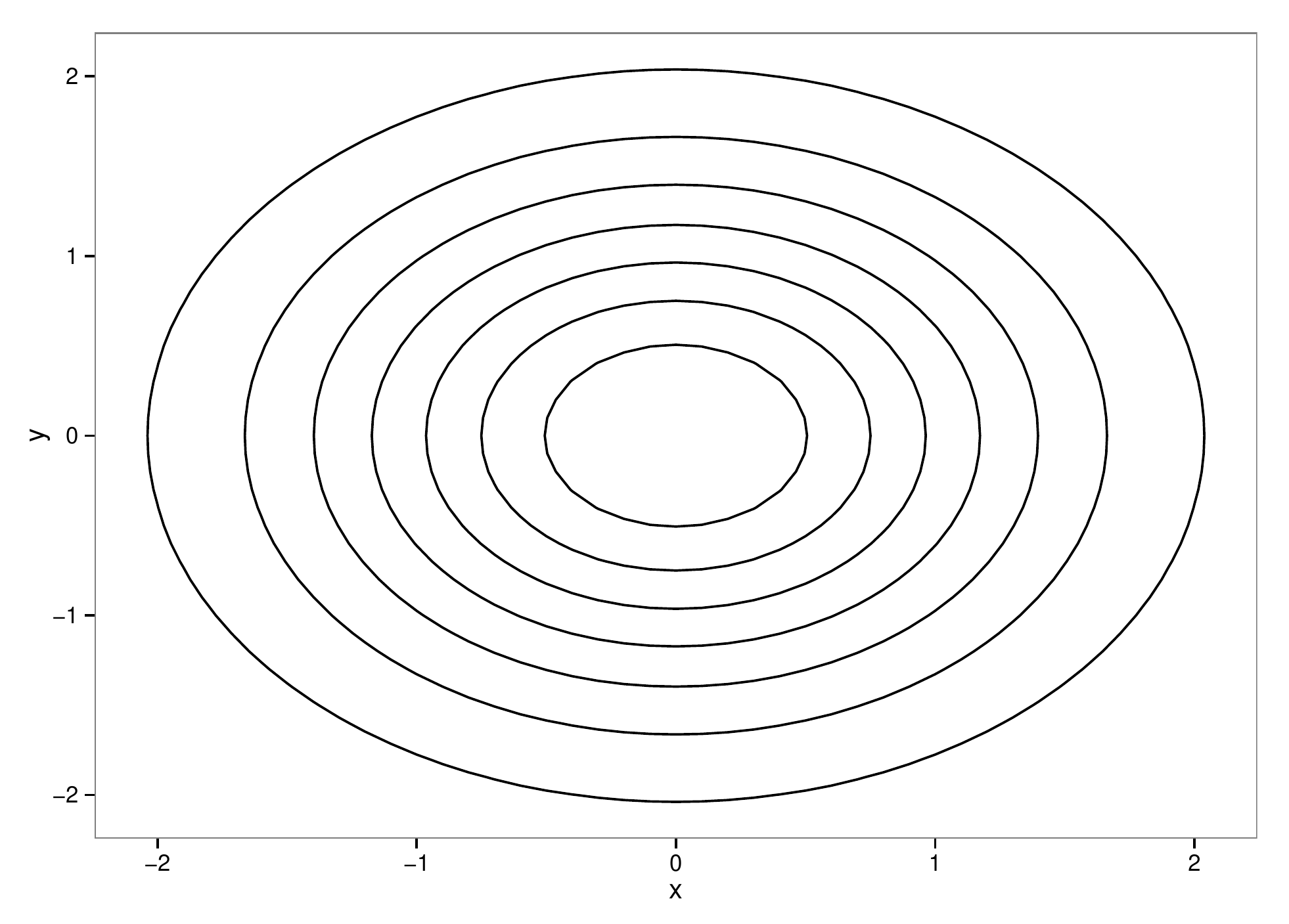}
		\caption{Independence}
		\label{fig:independencePic}
	\end{subfigure}
	\hspace{\fill}
	\begin{subfigure}[b]{.3\linewidth}
		\includegraphics[width=\linewidth]{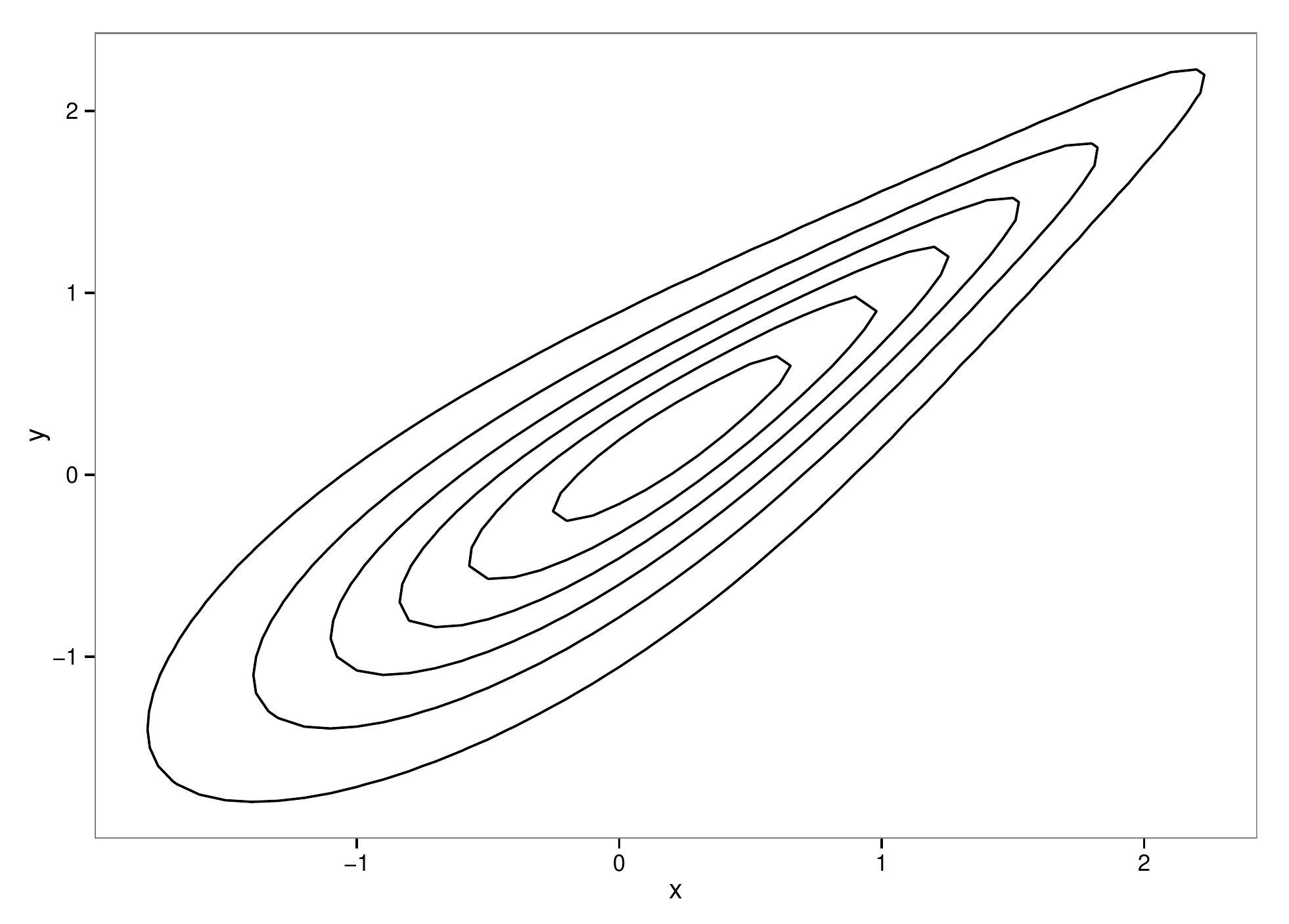}
		\caption{Gumbel}
		\label{fig:gumbelnPic}
	\end{subfigure}
	\caption{Contour plots of distributions used to generate data in the multivariate simulation.}
	\label{fig:contourPics}
\end{figure}

\begin{table}[H]
\centering
\begin{tabular}{lddd}
\hline
\multicolumn{1}{c}{} & \multicolumn{1}{c}{NPCP-C} & \multicolumn{1}{c}{NPCP-F} &\multicolumn{1}{c}{e-cp3o} \\ \hline
\multicolumn{1}{c}{} & \multicolumn{3}{c}{T=300, k(T)=2, K(T)=9}                                                                       \\ \cline{2-4}

Rand                                 &     0.958_{4\e{-3}}              &     0.616_{9\e{-3}}          &     0.685_{8\e{-3}}        \\
\# of cps                            &     2.130_{4\e{-2}}              &     0.320_{6\e{-2}}          &     4.000_{0.1}        \\
Time                                 &    73.163_{0.1}              &     1.871_{6\e{-2}}              &     0.125_{4\e{-3}}        \\ \hline

\multicolumn{1}{c}{} & \multicolumn{3}{c}{T=1200, k(T)=2, K(T)=9}                                                                       \\ \cline{2-4}

Rand                                 &0.979_{3\e{-3}}              &     0.865_{10^{-2}}      &   0.766_{10^{-2}}        \\ 
\# of cps                            &2.150_{4\e{-2}}             &     1.790_{9\e{-2}}       &  1.570_{7\e{-2}}        \\
Time                                 & 10,580.831_{4.0}              &    41.270_{0.7}             &   1.901_{5\e{-3}}        \\ \hline
\end{tabular}
\caption{\label{table:copulaSim}
Results of the multivariate simulation with different time series lengths. The subscripts indicate the standard errors for each value.
}
\end{table}

As was expected, in Table \ref{table:copulaSim} it is clear that the NPCP-C method obtained the best average Rand value in all situations. 
But this comes at a much increased average running time. This becomes very problematic when analysis of a single longer time series can take 
almost three hours. For shorter time series the e-cp3o provides the best combination between running time, estimated number of change points, 
and Rand value. For longer time series the NPCP-F procedure is the clear winner.\par


\section{Real Data}\label{Sec:Real}
In this section we apply the e-cp3o procedure to two real data sets. For our first application we make use of a dataset of monthly temperature 
anomalies. The second second consists of monthly foreign exchange (FX) rates between the United States, Russia, Brazil, 
and Switzerland.\par

\subsection{Temperature Anomalies}
For the first application of the e-cp3o procedure we examine the HadCRUT4 dataset of \cite{morice12}. This dataset consists of monthly global temperature anomalies from 1850 to 
2014. Since the dataset consists of anomalies, it does not indicate actual average monthly temperatures, but instead measured deviations from some predefined baseline. The time 
period used to create the baseline in this case spans 1960 to 1990.\par

The HadCRUT4 dataset contains two major components; one for land air temperature anomalies and another for sea-surface temperature anomalies. The analysis performed in this 
section will only consider the land air temperature anomaly component from the tropical region ($30^\circ$ South to $30^\circ$ North). This region was chosen because it was 
the most likely of all the presented regions to have a small difference between the minimum and maximum anomaly value, and be affected by changing seasons. More information about 
the dataset and the averaging process used can be found in the paper by \cite{morice12}.\par

From looking at the plot of the tropical land air anomaly time series it is suspected that there is some dependence between observations. This assumption is quickly confirmed by 
looking at the auto-correlation plot. As a result, we apply the e-cp3o procedure to the differenced data which visually appears to be piecewise stationary. The auto-correlation 
plot for the differenced data shows that much of the linear dependence has been removed, however, the same plot for the differences squared still indicates some dependence. As 
with the exchange rate data, we believe that this indicated dependence can be attributed to changes in distribution.\par

The e-cp3o procedure was applied with a minimum segment length of one year, corresponding to $\delta=11$; a maximum of $K(T)=20$ change points were fit, we chose $\alpha=1$, and 
$\epsilon=0.01$. Upon completion we identified change points at the following dates: July 1860, February 1878, January 1918, and February 1973, which are shown in Figure 
\ref{fig:climateTS}. With these change points we notice that the auto-correlation plots, for both the differenced and squared differenced data, show almost no statistically 
significant correlations. This is in line with our original hypothesis that the previously observed correlation was due to the distributional changes within the data.\par

\begin{figure}
    \centering
    \includegraphics[scale=.6]{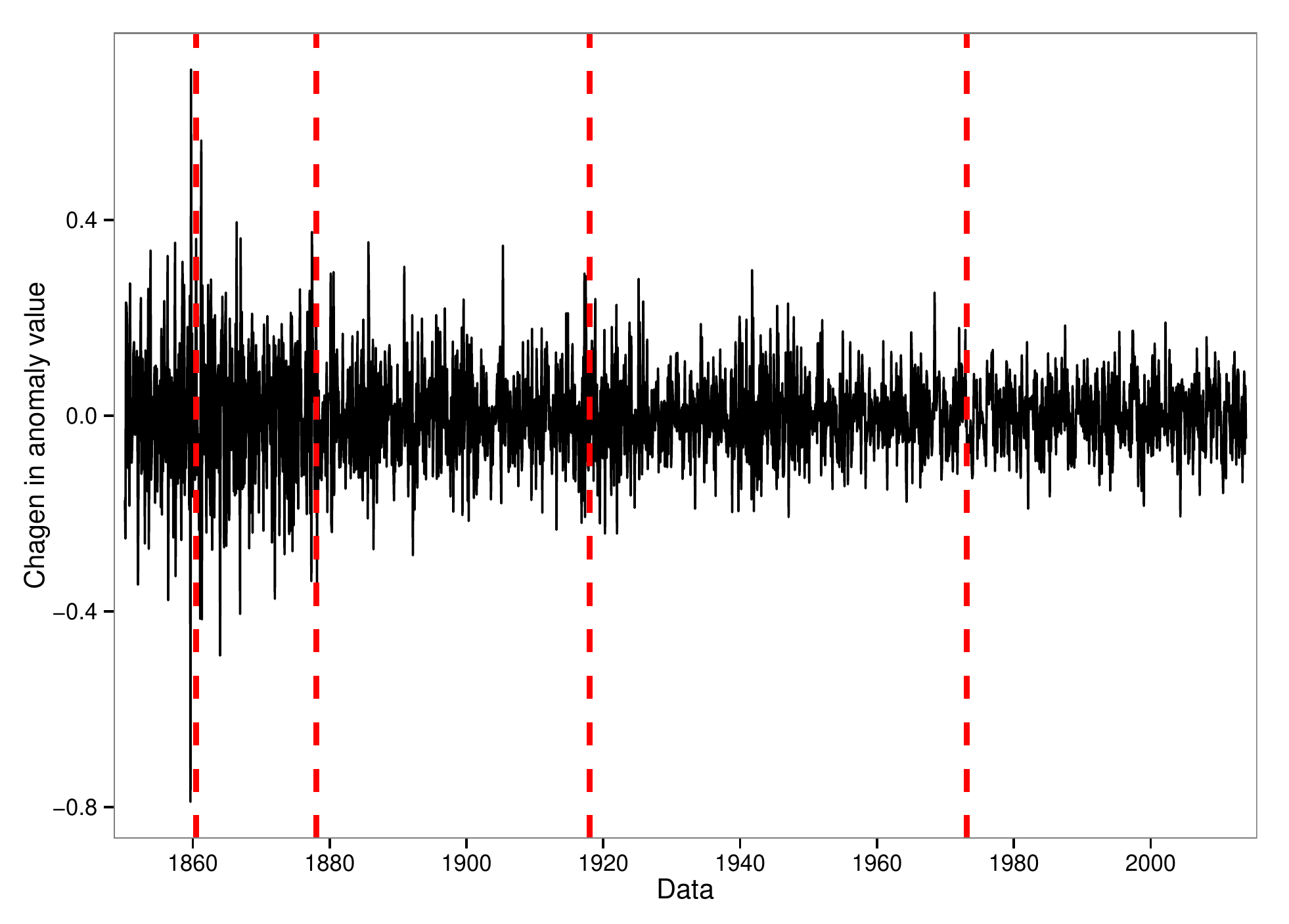}
    \caption{Change in land air temperature anomalies for the Tropical climate zone from February 1850 to December 2013. The cp3o estimated change point locations are 
    indicated by dashed vertical lines.}
    \label{fig:climateTS}
\end{figure}

Furthermore, the February 1973 change point occurs around the same time as the United Nations Conference on the Human Environment. This conference, which was held in June 1972, 
focused on human interactions with the environment. From this meeting came a few noteworthy agreed upon principles that have potential to impact land air temperatures:
\begin{enumerate}
    \item Pollution must not exceed the environment's ability to clean itself
    \item Governments would plan their own appropriate pollution policies
    \item International organizations should help to improve the environment
\end{enumerate}
These measures, undoubtedly played a role in the decreased average anomaly size, as well as an almost $66\%$ decrease in the variance.

\subsection{Exchange Rates}
We next apply the e-cp3o procedure to a set of spot FX rates obtained through the \texttt{R} package \texttt{Quandl} \citep{quandl13}. For our analysis we consider the three 
dimensional time series consisting of monthly FX rates for Brazil (BRL), Russia (RUB), and Switzerland (CHF). All of the rates are against the United States (USD). The time 
horizon spanned by this time series is September 30, 1996 to February 28, 2014, which results in a total of 210 observations. Looking at the marginal series it is obvious that 
each of the individual FX rates does not generate a stationary process. Thus, instead of looking at the actual rate, we look at the change in the log process. This transformation 
results in marginal processes that appear to at least be piecewise stationary.\par

Our procedure is only guaranteed to work with independent observations, so we must hope that our data either satisfies this condition or is very close to it. The papers by 
\cite{hsieh88,hsieh89} provide evidence that changes in the daily exchange rate are not independent, and that there is a reasonable amount of nonlinear dependence. However, they 
are not able to conclude whether this observed dependence is due to distributional changes or some other phenomena. For this reason we are instead interested in the change in the 
monthly exchange rate, which is more likely to either be weakly dependent or show no dependence. To check this we examine the auto/cross-correlation plots for both the difference 
and difference squared data. This preliminary analysis shows that there is no significant auto or cross-correlation within the differenced data, while for the squared differences 
there is only significant auto-correlation for Switzerland at a lag of one month.\par

The e-cp3o procedure is applied with a minimum segment length of six observations (half a year), which corresponds to a value of $\delta=5$. Furthermore, we have chosen to fit at 
most $K(T)=15$ change points, and values of $\alpha=1$ and $\epsilon=0.01$ were used. This specific choice of values resulted in change points being identified at May 31, 1998 
and March 31, 2000. These results are depicted in Figure \ref{fig:FXrates}.\par

It can be argued that changes in Russia's economic standing leading up to the 1998 ruble crisis are the causes of the May 31, 1998 change point. During the Asian financial crisis 
many investors were losing faith in the Russian ruble. At one point, the yield on government bonds was as high as $47\%$. This paired with a $10\%$ inflation rate would normally 
have been an investor's dream come true. However, people were skeptical of the government's ability to repay these bonds. Furthermore, at this time Russia was using a floating 
pegged rate for its currency, which resulted in the Central Bank's mass expenditure of USD's which further weakened the ruble's position.\par

The change point identified at March 31, 2000 also coincides with an economic shift in one of the examined countries. The country  most likely to be the cause of this change is 
Brazil. In 1994 the Brazilian government pegged their currency to the USD. This helped to stabilize the county's inflation rate; however, because of the Asian financial crisis 
and the ruble crisis many investors were averse to investing in Brazil. In January 1999 the Brazilian Central Bank announced that they would be changing to a free float exchange 
regime, thus their currency was no longer pegged to the USD. This change devalued the currency and helped to slow the current economic downturn. The change in exchange regime and 
other factors led to a $48\%$ debt to GDP ratio, besting the IMF target and thus increasing investor faith in Brazil.

\begin{figure}
    \centering
    \begin{subfigure}[b]{.3\linewidth}
        \includegraphics[width=\linewidth]{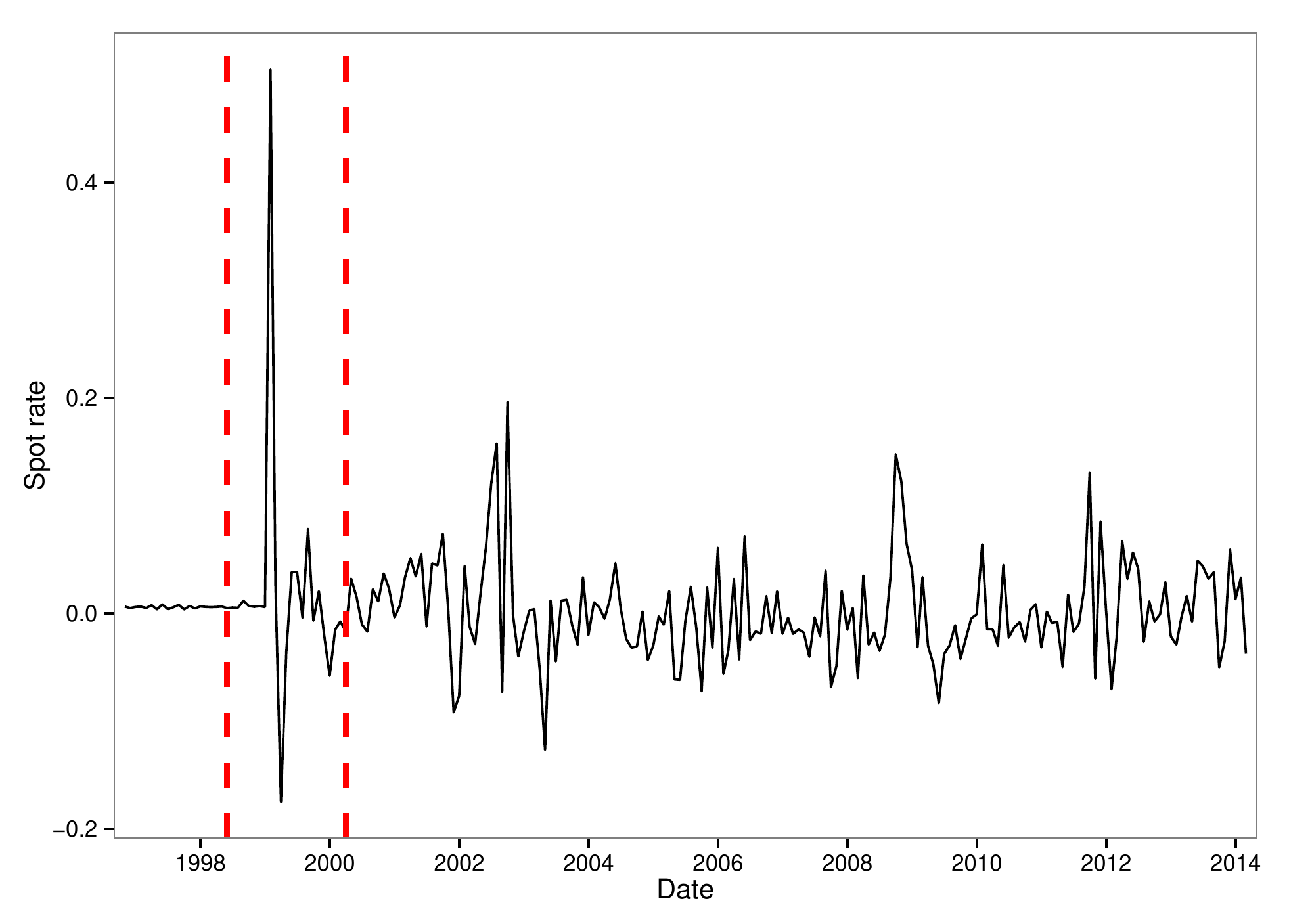}
        \caption{Brazil}
        \label{fig:braFX}
    \end{subfigure}
    \hspace{\fill}
    \begin{subfigure}[b]{.3\linewidth}
        \includegraphics[width=\linewidth]{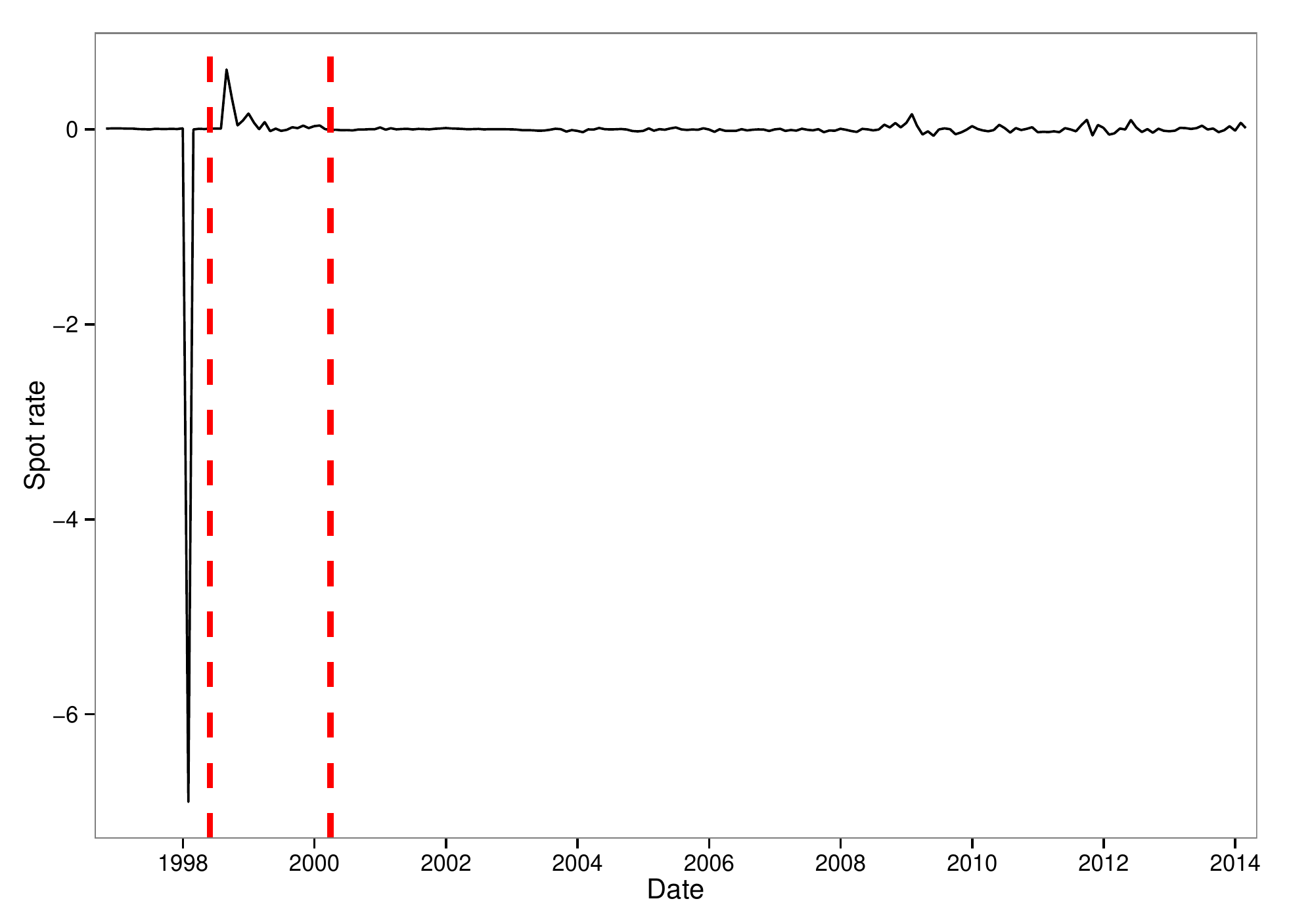}
        \caption{Switzerland}
        \label{fix:swsFX}
    \end{subfigure}
    \hspace{\fill}
	\begin{subfigure}[b]{.3\linewidth}
       \includegraphics[width=\linewidth]{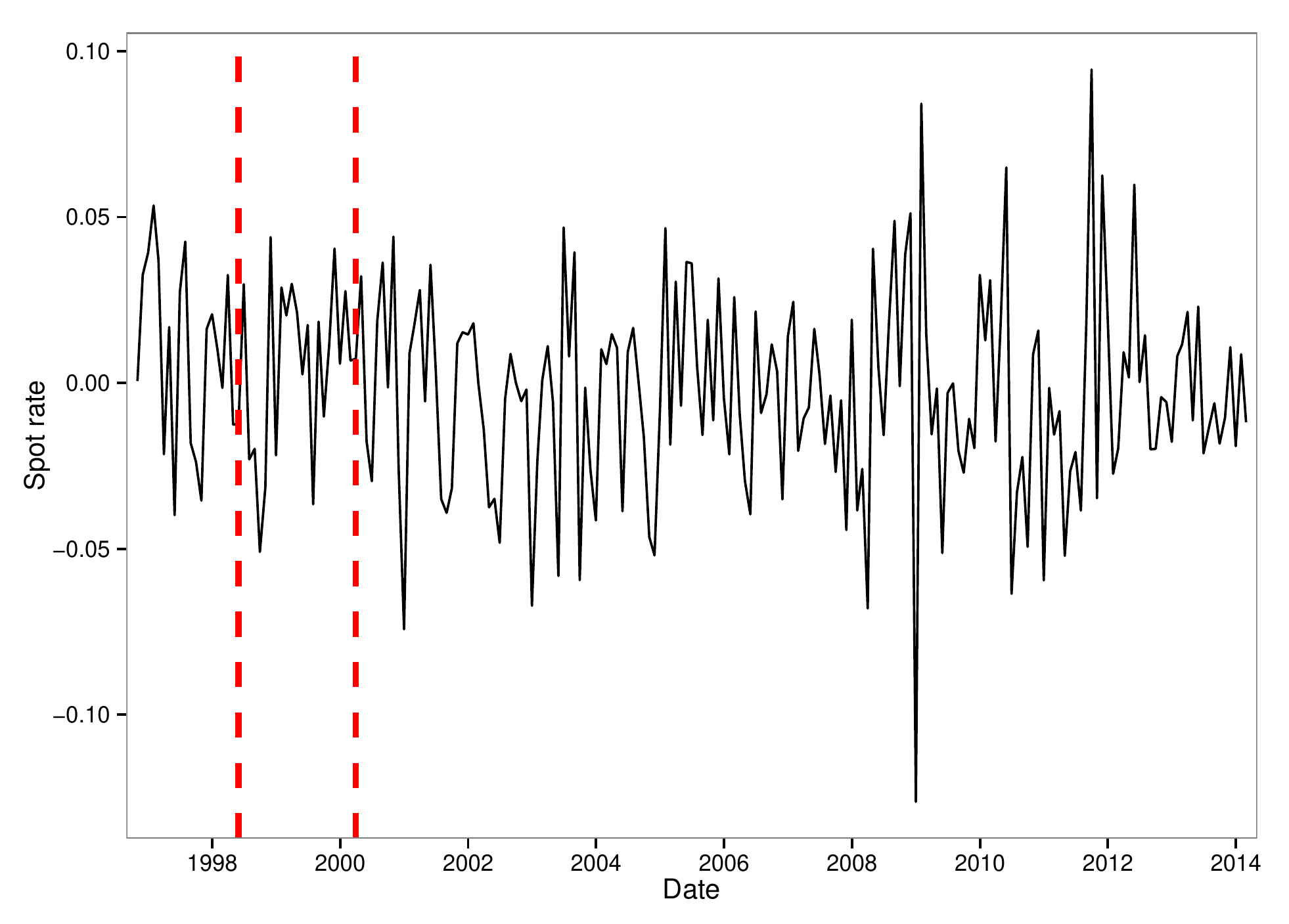}
       \caption{Russia}
       \label{fig:rusFX}
    \end{subfigure}
    \caption{Time series for FX spot rates for each of the three countries. Estimated change point locations indicated by vertical lines.}
    \label{fig:FXrates}
\end{figure}

\section{Conclusion}\label{Sec:Conclusion}
We have presented an exact search algorithm that incorporates probabilistic pruning in order to reduce the amount of unnecessary calculations. This search method 
can be used with almost any goodness of fit measure in order to identify change points in multivariate time series. Asymptotic theory has also been provided showing 
that the cp3o algorithm can generate consistent estimates for both the number of change points as well as the change point locations as the time series increases, 
provided that a suitable goodness of fit measure is provided. Furthermore, the decoupling of the search procedure and the determination of the number of estimated change 
points allows for the cp3o algorithm to efficiently generate a collection of optimal segmentations, with differing numbers of change points. This is all accomplished 
without the user having to specify any sort of penalty constant or function.\par

By combining the cp3o search algorithm with E-Statistics we developed e-cp3o, a method to perform nonparametric multiple change point analysis that can detect 
\textit{any} type of distributional change. This method combines an approximate statistic with an exact search algorithm. The slight loss in accurately estimating 
change point locations on finite time series is greatly outweighed by the dramatic increase in speed, when compared to similar methods that combine an exact statistic 
with an approximate search algorithm.\par

\bibliographystyle{JASA}
\bibliography{fast}
\end{document}